\documentclass{article}[12pt]

\usepackage{amsmath,amssymb,amsthm,amsfonts}
\usepackage{times}
\usepackage{enumerate}
\usepackage{epsfig}
\usepackage{graphicx}
\usepackage{pgf}
\usepackage{chemarr}
\usepackage{float}
\restylefloat{table}

\usepackage[linesnumbered,ruled,commentsnumbered,noline]{algorithm2e}

\usepackage{multirow}
\usepackage[left]{lineno}
\usepackage{hhline}
\usepackage{lipsum}

\newtheorem{theorem}{Theorem}[section]
\newtheorem{lemma}[theorem]{Lemma}

\newtheorem{remark}[theorem]{Remark}

\def \non{{\nonumber}}
\def \tilde{\widetilde}
\def \hat{\widehat}

\newcommand{\f}{\frac}

\newcommand{\rt}{\rightarrow}

\newcommand{\np}{\noindent}

\newcommand{\hs}{\hspace}
\newcommand{\vs}{\vspace}

\newcommand{\s}{\sigma}
\def\SC{\mathcal}

\def \triple|{|\! | \! |}

\def\le{\left}
\def\ri{\right}

\def\l{\lambda}

\def\<{\langle}
\def\>{\rangle}
\def\~{\tilde}
\newcommand{\EE}{\ensuremath{\mathsf{E}}}
\newcommand{\PP}{\ensuremath{\mathsf{P}}}

\def\N{\mathbb N}
\def\Z{\mathbb Z}
\def\R{\mathbb R}

\def\Nreact{R} %number of reactions
\def\Nsp{M} %number of species

\title{Jump-Diffusion Approximation of 
Stochastic Reaction Dynamics: Error bounds and Algorithms} 
\author{Arnab Ganguly$^{1}$\thanks{authors with equal contribution}, \ Derya Alt{\i}ntan $^{2,3,4}$\footnotemark[1], \ Heinz Koeppl$^{4}$}

\begin{document}
\maketitle
\footnotetext[1]{Department of Mathematics, University of Louisville, arnab.ganguly@louisville.edu}
\footnotetext[2]{ Department of Mathematics, Sel\c{c}uk University, altintan@selcuk.edu.tr}
\footnotetext[3]{Department of Scientific Computing, Middle East Technical University}
\footnotetext[4]{Department of Electrical Engineering and Information Technology, Technische Universit\"{a}t Darmstadt,\\ \hs*{0.5cm} heinz.koeppl@bcs.tu-darmstadt.de}
%\footnotetext{Authors with equal contribution.}
%\let\thefootnote\relax\footnote{heinz.koeppl@bcs.tu-darmstadt.de}
%\linenumbers
\begin{abstract}
Biochemical reactions can happen on different time scales and also 
the abundance of species in these reactions can be very different from each other. 
Classical approaches, such as deterministic or  stochastic approach 
fail to account for or to exploit  
this multi-scale nature, respectively. In this paper, we propose 
a jump-diffusion approximation for multi-scale  Markov jump  processes that couples the 
two modeling approaches. An error bound of the proposed approximation 
is derived and used to partition the reactions into fast and slow sets, where  the 
fast set is simulated by a stochastic 
differential equation and the slow set is modeled by a discrete chain. 
The error bound leads to a very efficient dynamic partitioning algorithm which
has been implemented for  several multi-scale reaction systems. The gain in computational efficiency 
is illustrated by a realistically sized model of a signal transduction cascade coupled to a gene expression
dynamics. \\

\noindent
{\bf MSC 2010 subject classifications:}   60H30,  60J28,  	92B05  \\

\noindent
{\bf Keywords:}  Jump diffusion processes, diffusion approximation, Markov chains, multiscale networks, biochemical reaction networks.
\end{abstract}

\vspace{.1in}

\section{Introduction}

\label{sec:intro}
\vs{.2cm}

A biochemical reaction system involves multiple chemical reactions and several molecular species. 
Recent advances in single cell and single molecule imaging together with microfluidic techniques  have testified to the random 
nature of gene expression and protein abundance in single cells \cite{ccsl:13,ee:10,fcs:10,yxrlx:06}. 
The stochastic nature of  a well-mixed biochemical reaction system is most often captured by modeling the dynamics of  its species' abundance as a continuous time Markov chain (CTMC) 
\cite{ ak:11}. 
Popular algorithms for  exact simulations of such  reaction systems include Gillespie's first and next reaction method \cite{gb:00} and its more efficient variants  \cite{gill:92}. 
These algorithms track every molecular reaction event and thus 
become computationally expensive when reaction system becomes  larger or more complex. 
For reaction systems involving fast reactions and high copy number of species, 
substantial gain in simulation speed is often obtained by resorting to approximate algorithms like tau-leaping  or Langevin (diffusion) approximation. 
However, chemical reactions in biological cells occur with varying orders of abundance of molecular species and varying orders of magnitudes of the reaction rates. 
In these scenarios, suitable hybrid methods need to be implemented to gain speed and efficiency while maintaining the certain level of accuracy. 

Typically, in a hybrid model significant computational efficiency is obtained by treating  some appropriate species as continuous variables and the 
others as discrete ones. The first step in this approach involves partitioning the reaction set into ``fast'' and  ``slow'' reactions. 
The reason for this partitioning is to simulate the fast set either by Langevin (diffusion) approximation or by ordinary differential equation (ODE) approximation while keeping the discrete Markov chain formulation for the slow ones. 
The resulting approximate algorithms give rise to hybrid stochastic  processes where the species with high copy numbers are treated as continuous variables, while the ones with low copy numbers are kept as discrete variables.  
The dynamics of the continuous variables then can be seen to be governed by  ordinary differential equations or stochastic differential equations punctuated by jumps due to the discrete components.

Based on that idea, different hybrid models have been proposed  \cite{cdr:09,hr:02,sk:05}. For example, in \cite{sk:05},  
authors separate the reactions  into fast and slow groups 
such that the Langevin equation is  used 
to simulate the dynamics of fast reactions while integral 
form of the next reaction method is used 
to describe the behavior of the slow ones. 
\cite{hwkt:13} applies  a method of conditional moments (MCM) 
which uses a moment based description for 
the species with high copy number of molecules 
while a stochastic description  is kept for species 
with low copy number of molecules.  
A hybrid representation, that assumes a continuous
and deterministic  behavior 
of the conditional expectation of high copy  species 
given 
the state of  species with low copy numbers, is  introduced in \cite{hmmw:10}. In  \cite{actvh:005}, authors proposed three different algorithms 
 for simulating the hybrid systems  that solve deterministic equations and trace the stochastic reaction event in the meantime.
 Different hybrid strategies for solving 
the chemical master equation are proposed in \cite{hl:07,jah:11,mlsh:12}.

%{\color{blue}In  \cite{actvh:005}, authors proposed three different algorithms 

%based on the direct method, the first reaction method \cite{gill:92}, the next reaction method \cite{gb:00}  for simulating the hybrid systems  that solve deterministic equations and trace the stochastic reaction event in the meantime.} 

Jahnke and Kreim considered in \cite{jk:12}  a piecewise  deterministic model where species with low copy numbers  are considered as discrete stochastic variables and the species with high copy numbers are treated as considered as continuous variables. In such a model a CTMC process describing the evolution of species with low copy number is coupled with  ODEs representing the dynamics of high copy species. \cite{jk:12} studied the partial thermodynamic limit of such a system and proved that after suitable scaling the approximate error of the hybrid model  in the marginal distributions is of the order $M^{-1}$, where $M$ is a scaling parameter of the system. The parameter $M$ captures the abundance of  high copy  species and is typically chosen such that those species are $O(1)$. However, the partition of the species set is still subjective and it is not immediately clear how to use the result in \cite{jk:12} to form an objective measure for partitioning the species set. Also, it should be noted that the partitioning the species set will have the following effect: a reaction which affects a discrete species and a continuous species will be treated differently in the equations describing the dynamics of the two species. While for the discrete species the number of occurrence of such a reaction will be considered as a stochastic counting process, it will be calculated by a deterministic integral for the continuous species. This might slow  down an algorithm which is based on such a partitioning of the species set, as the same quantity is calculated in two different ways.  In contrast, the present article considers a hybrid diffusion model where the reaction set is partitioned into slow and fast reactions and most importantly, our approach to the error analysis has the sole goal of devising an objective measure for partitioning the reaction set. Our result is formulated in an efficient hybrid algorithm which itself is able to do the partitioning of the reaction set and also check the validity of the partitioning dynamically over the course of time. 

%Moreover, another salient feature of our model is that it allows for species abundance and reaction rates to vary over multiple orders of magnitude.

Intuitively, the reaction set with higher propensities 
will occur at a greater speed and using a diffusion
approximation to simulate the occurrence of those reactions 
will preserve the accuracy of the simulation. 
However, in existing literature the identification of reactions with higher 
propensities is often done in a subjective and ad hoc way; 
one difficulty in the designation process lies within varying 
magnitudes of different propensities. 
The higher value of a propensity of a reaction may occur due to its  large 
rate constant  or due to high copy number 
of the reactant species or due to some combination of both the factors.   
The present article attempts to solve this problem by 
introducing a scaling parameter $N$ and suitable scaling 
exponents $\alpha_k$, $\beta_k$ to capture the order of variation of  
the species abundance  and magnitudes of  the reaction rates. 
These types of scaled models were previously 
studied in \cite{KK13} where the authors used 
limiting arguments and 
stochastic averaging techniques for model reduction (also see \cite{KK14}).
The most significant portion of the present paper is a rigorous 
error analysis aimed toward proper identification of the partitioning 
of the reaction set into the fast and the slow ones for a given tolerance for error.

The main theoretical result obtaining the required error bound has been described in Theorem \ref{thm:diffussion}.
The appearance of the different scaling 
exponents $\alpha_k$ and $\beta_k$ in the error bound singles out the reactions 
whose occurrences when simulated by diffusion 
approximation give the lowest possible error.  The methodology forms the backbone of a very accurate dynamic partitioning algorithm  described in Algorithm \ref{sumalgo}. 
While most of the previous works on hybrid simulation 
were based on the chemical master equation \cite{hwkt:13,mlsh:12}, 
the present paper uses an approach based on a  representation 
of the state vector by stochastic equations. These types of differential equation representations of the state of the system \cite{ak:11, EK86} give deeper insight into the full trajectories of both the exact and the approximating processes in contrast to a chemical master type equation, which only describes the state probabilities at  specific time points. The pathwise representations of the processes also allow us to define the error of approximation in a suitable rigorous way, and the corresponding error bound is then derived by proper  use of 
techniques from stochastic analysis.

The rest of the paper is organized as follows. In Section \ref{sec:model}, we describe the Markov chain formulation of the reaction system and the approximating hybrid diffusion model. The pathwise representation of both, the exact and the approximating processes through appropriate stochastic equations are also given. The section also introduces the important scaling exponents required for describing a multi-scale model, and the main error bound is  obtained in Theorem \ref{thm:diffussion}.  
Section \ref{sec:algorithm} concerns itself with the development of the dynamic partitioning algorithm by utilizing the novel
 error bound. Section 
\ref{sec:runge_kutta} describes a  Runge-Kutta  integration method 
for simulating the hybrid diffusion equation, and   
 Section \ref{sec:conservation} makes proper use of the naturally occurring conservation relations in  reaction systems to reduce the dimensionality of the system state and to make the algorithm numerically more robust. Finally, in Section \ref{sec:application}, the proposed algorithm is implemented to analyze 
the Michaelis-Menten kinetics,  the Lotka-Volterra model and 
 the important MAPK pathway together with its gene expression.

\section{Model Setup and Error Bound}
\label{sec:model}
\vs{.2cm}
We will consider chemical reaction systems consisting of $\Nsp$ chemical species, $S_1,\hdots,S_M$, $\Nreact$ reactions, \(R_1,\hdots,R_R\)
\begin{equation}
\label{m_1}
R_k: \quad \sum_{i=1}^\Nsp\nu_{ik}^- S_i \rt \sum_{i=1}^\Nsp\nu^+_{ik}S_i, \quad k=1,\hdots,\Nreact.
\end{equation}
Here, $\nu^-_{ik}$ and $\nu^+_{ik}$ respectively 
denote the number of molecules of the species $S_i$ 
consumed and created  due to one occurrence of reaction $R_k$.
Let $X(t) \in \N^\Nsp$ denote the state of the 
reaction system at time $t$. If $\nu_k$ denote the vector 
with $i$-th component $\nu^+_{ik} -\nu^-_{ik},$ 
then an occurrence of $R_k$ at time $t$ 
updates the state by the following equation
$$X(t) = X(t-) + \nu_k.$$
The process $X$ is a CTMC with transition probabilities governed by
$$\PP[X(t+h) = x+\nu_k|X(t) = x] = a_k(x)h +o(h),$$
where $a_k$ is the propensity function associated 
with reaction $R_k$ and is calculated by 
the law of mass action kinetics in the present article. 
In other words,
\begin{align*}
a_k(x)=  c_k \prod_{i=1}^\Nsp {x_i \choose \nu^-_{ik}},
\end{align*}
where  $c_k$ is the stochastic reaction rate constant for reaction $R_k$. A pathwise representation of the process $X$ 
 is given by the following stochastic equation
\begin{equation}
\label{RTM0}
X(t) = X(0) + \sum_{k=1}^\Nreact\xi_k(\int_0^t a_k(X(s)) ds)\nu_k,
\end{equation}
where the $\xi_k$ are independent unit Poisson processes. It should be noted that the quantities $\xi_k(\int_0^t a_k(X(s)) ds)$ count the number of occurrences of the reaction $R_k$ by time $t$. \eqref{RTM0} is an example of a random time change representation in which stochastic equations involve random time changes of other Markov processes (for details, see \cite[Chapter 6]{EK86}). The generator of the the Markov process $X$ is given by
\begin{align*}
\SC{A}f(x)  = \sum_{k=1}^{R} a_k(x) (f(x+\nu_k)-f(x)),
\end{align*} 
that is, for a bounded-measurable function $f$ the quantity 
$$f(X(t)) - f(X(0)) - \int_0^t \SC{A}f(X(s))$$
is a martingale (with respect to the filtration $\{\SC{G}_t\}$ defined in \eqref{G-filt}). Consequently, taking $f(y) = 1_{\{x\}}(y)$, it follows that 
the probability mass function of $X(t)$ satisfies the following
Kolmogorov forward equation (or the master equation in chemical literature)
\begin{align*}
\f{\partial p(x,t)}{\partial t} = \sum_{k=1}^\Nreact [a_k(x-\nu_k)p(x-\nu_k,t) - a_k(x) p(x,t)],
\end{align*}
where $p(x,t) = \PP(X(t)=x).$ Following \cite{KK13}, we next introduce an appropriate scaled process $X^N$ 
which will be a primary object in our error analysis. 
In a typical multi-scale model, the abundance of various species in the reaction system 
can vary over different orders of magnitude. 
Let $\alpha_i >0$ and define $\bar{X}^N_i = X_i/N^{\alpha_i}$. 
The $\alpha_i$ are chosen such that $\bar{X}^N_i = O(1)$; in other words,
$\alpha_i$ measures the order magnitude in abundance for species $S_i$. 
In a typical reaction system, the stochastic rate constants $c_k$ 
can also vary over different orders of magnitude. Therefore, with the same spirit 
we define $d_k = c_k/N^{\beta_k}$ such that $d_k = O(1)$.

Under the above scaling of the state vector and the rate constants, the propensities $a_k$ scale as
$a_k(X) = N^{\beta_{k}+\nu_{k}^{-}\cdot \alpha}\l_{k}(\bar{X}^N)$, where $\alpha=(\alpha_1,\hdots,\alpha_{\Nsp}).$ 
For example, for unimolecular reactions $S_1\rt *$  we get 
$$a_{1}(X) = c_{1}X_{1} = N^{\beta_{1}}d_1N^{\alpha_1}\bar{X}^N_1 = N^{\beta_1+\alpha_1} d_1\bar{X}^N_1=N^{\beta_1+\alpha_1}\l_1(\bar{X}^N), $$
while for bimolecular reactions of the type $S_1+S_2\rt *$ one obtains 
$$a_2(X) = c_2 X_1X_2 = N^{\beta_2+\alpha_1+\alpha_2}d_2\bar{X}^N_1\bar{X}^N_2 = N^{\beta_2+\alpha_1+\alpha_2}\l_2(\bar{X}^N).$$
Note that with these choices of exponents, the functions $\l_k(\cdot)$ are $O(1)$. Oftentimes, it is beneficial to scale time  as well by $t\rt tN^\gamma$. With all of the above scalings, we look at the process $X^N$ defined by $X^N(t) = \bar{X}^N(tN^\gamma)$.
It readily follows from \eqref{RTM0} that $X^N$ satisfies
\begin{align}
\label{RTM1}
X^N(t) &= X^N(0) + \sum_{k=1}^\Nreact \xi_k(N^{\rho_k} \int_0^t \l_k(X^N(s)) ds ) \ \nu^N_k,
\end{align}
where  $\rho_k = \gamma+\beta_k+\nu_{k}^{-}\cdot \alpha$ and $\nu^N_{ki} = \nu_{ki}/N^{\alpha_i}.$

\subsection{Mathematical preliminaries}

We start with the following useful lemma.
\begin{lemma}\label{lemma0}
Let $\xi$ be a unit Poisson process adapted to a filtration $\{\SC{H}_t\} $ and $\s_1, \s_2$ bounded $\{\SC{H}_t\}$-stopping times. Then
\begin{align*}
\EE\le[|\xi(s_1)  -\xi(\s_2)|\ri] = \EE\le[|\s_1 - \s_2|\ri].
\end{align*} 
\begin{proof}
First note that both $\s_1\vee\s_2$ and $\s_1\wedge\s_2$ are $\{\SC{H}_t\}$-stopping times. Since $\xi(t)-t$ is a $\{\SC{H}_t\}$-martingale, by optional sampling theorem 
we have
\begin{align*}
\EE[\xi(\s_1\wedge\s_2)] =\EE[\s_1\wedge\s_2], \quad \EE[\xi(\s_1\vee\s_2)]=\EE[\s_1\vee\s_2].
\end{align*}
The assertion now follows because
\begin{align*}
\EE\le[|\xi(\s_1)  -\xi(\s_2)|\ri] & = \EE\le[\xi(\s_1)\vee \xi(\s_2) - \xi(\s_1)\wedge \xi(\s_2) \ri]\\
& = \EE\le[\xi(\s_1\vee\s_2)\ri]  - \EE\le[\xi(\s_1\wedge\s_2)\ri] =  \EE\le[\s_1\vee\s_2\ri]  - \EE\le[\s_1\wedge\s_2\ri]\\
& = \EE\le[|\s_1  -\s_2|\ri]. 
\end{align*}
Here the second equality holds because $\xi$ is an increasing process.
\end{proof}
\end{lemma}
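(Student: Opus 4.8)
The plan is to convert the absolute value $|\xi(\sigma_1) - \xi(\sigma_2)|$ into a difference of two monotone quantities and then apply the optional sampling theorem to the compensated Poisson martingale $\xi(t) - t$. Concretely, since $\xi$ is nondecreasing, for any two times one has $|\xi(\sigma_1) - \xi(\sigma_2)| = \xi(\sigma_1 \vee \sigma_2) - \xi(\sigma_1 \wedge \sigma_2)$, while trivially $|\sigma_1 - \sigma_2| = (\sigma_1 \vee \sigma_2) - (\sigma_1 \wedge \sigma_2)$. Thus the lemma reduces to establishing the single identity $\EE[\xi(\tau)] = \EE[\tau]$ for a bounded $\{\SC{H}_t\}$-stopping time $\tau$, which I would then apply with $\tau = \sigma_1 \vee \sigma_2$ and with $\tau = \sigma_1 \wedge \sigma_2$ and subtract.

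For that single identity I would first note the standard facts that $\sigma_1 \vee \sigma_2$ and $\sigma_1 \wedge \sigma_2$ are again $\{\SC{H}_t\}$-stopping times bounded by the same constant $T$ that bounds $\sigma_1, \sigma_2$, and that $M(t) := \xi(t) - t$ is a mean-zero $\{\SC{H}_t\}$-martingale. Since $\tau \leq T$, optional sampling applies to $M$ stopped at $\tau$ and gives $\EE[\xi(\tau)] - \EE[\tau] = \EE[M(\tau)] = \EE[M(0)] = 0$; moreover $\EE[\xi(\tau)] \leq \EE[\xi(T)] = T < \infty$, so every quantity is finite and the subtraction is legitimate. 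Combining, $\EE[|\xi(\sigma_1) - \xi(\sigma_2)|] = \EE[\xi(\sigma_1 \vee \sigma_2)] - \EE[\xi(\sigma_1 \wedge \sigma_2)] = \EE[\sigma_1 \vee \sigma_2] - \EE[\sigma_1 \wedge \sigma_2] = \EE[|\sigma_1 - \sigma_2|]$.

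This is a short argument with essentially no obstacle. The only step that requires any care is the application of the optional sampling theorem: it is precisely the boundedness hypothesis on $\sigma_1$ and $\sigma_2$ that lets one invoke it without a separate uniform-integrability check on the stopped martingale. The single idea doing the real work is the monotonicity rewrite at the start, which turns the nonlinear quantity $|\xi(\sigma_1) - \xi(\sigma_2)|$ into something linear in $\xi$, hence amenable to the martingale property.
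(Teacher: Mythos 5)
Your proposal is correct and follows essentially the same route as the paper: rewrite $|\xi(\s_1)-\xi(\s_2)|$ as $\xi(\s_1\vee\s_2)-\xi(\s_1\wedge\s_2)$ using monotonicity of $\xi$, then apply optional sampling to the compensated martingale $\xi(t)-t$ at the bounded stopping times $\s_1\vee\s_2$ and $\s_1\wedge\s_2$. The only difference is your explicit finiteness check, which the paper leaves implicit.
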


Now let  $\xi_k, k=1,\hdots,R$ be independent unit Poisson processes and define the filtration
\begin{align*}
\SC{F}'_{\~u}\equiv \s\{\xi_k(s_k): 0\leq s_k\leq u_k, k=1,\hdots, R\},
\end{align*}
where $\~u=(u_1,u_2\hdots,u_R)$ is a multi-index. Let $\SC{F}_{\~u}$ be the completion of the filtration of $\SC{F}'_{\~u}$. With $X$ as in \eqref{RTM0}, define 
$$\tau_k(t)=\int_0^t a_k(X(s))\ ds.$$
Then $\tau(t) = (\tau_1(t),\hdots, \tau_R(t))$ is a multi-parameter $\{\SC{F}_{\tilde u}\}$-stopping time (see \cite[Chapter 6]{EK86}). Consequently,  for two intensity functions $a^1_k, a^2_k$ and the corresponding processes $X^1, X^2$, the following is an outcome of Lemma \ref{lemma0}:
\begin{align}
\non
\EE\le[\le|\xi_k(\int_0^t a^1_k(X^1(s))\ ds)-|\xi_k(\int_0^t a^2_k(X^2(s))\ ds)\ri|\ri] \\
\label{fact0}
= \EE\le[\le|\int_0^t a^1_k(X^1(s))\ ds)-\int_0^t a^2_k(X^2(s))\ ds\ri|\ri].
\end{align}

Next define the filtration $\{\SC{G}_t\}$ by
\begin{align}\label{G-filt}
\SC{G}_t = \SC{F}_{\tau(t)}.
\end{align}
Then notice that by the optional sampling theorem, for each $k$, $\tilde{\xi}_k(\tau_k(t)) = \xi_k(\tau_k(t)) - \tau_k(t)$ is a $\{\SC{G}_t\}$-martingale.

\subsection{Hybrid Diffusion Models}
\label{ssec:sde}
 \vs{.2cm} 
 After a possible renaming of the species, assume that reaction \(R_1\) is of the type $S_1+ S_2 \rt S_3$.
The goal of this section is to compute a bound for the error when the reaction $R_1$ is simulated according to a diffusion approximation. At the process level, this typically means that we are replacing the process $\xi_1(t)$ with $W_1(t)+t$, where $W_1$ is a standard Brownian motion. With this change, the approximating process $Z^N$ satisfies the equation
\begin{align*}
Z^N(t) &= X^N(0) + N^{\rho_1} \int_0^t \l_1(Z^N(s) )ds\:\nu^N_1 + W_1(N^{\rho_1} \int_0^t \l_1(Z^N(s) )ds)\nu^N_1\\
&\hspace{.4cm}+\sum_{k>1} \xi_k(N^{\rho_k} \int_0^t \l_k(Z^N(s)) ds  )\ \nu^N_k.
\end{align*}
The goal of this section is to bound the error $e(t) = \EE|X^N(t)-Z^N(t)|$. It should be noted that the error $e$ depends on the coupling between the processes $X^{N}$ and $Z^{N}$. In particular, this means that $e$ will depend on the construction of the Brownian motion $W_1$. The following lemma proves the existence of a Brownian motion $W_1$ on the same probability space as $\xi_1$ (see \cite[Chapter 11, Section 3]{EK86}).
{%\color{blue}
\begin{lemma}\label{BMconst}
There exists a Brownian motion $W_1$ on the same probability space as $\xi_1$ such that $W_1$ is independent of $\xi_k$, $k\neq 1$
\begin{align*}
\sup_t \f{|\tilde{\xi}_1(t) - W_1(t)|}{\log(2\vee t)}<\infty,
\end{align*}
where $\tilde{\xi}_1(t)  = \xi_1(t)-t$ denotes the centered Poisson process. Furthermore, for $\delta,\kappa>0$, there exist constants $\theta,K,C>0$ such that
\begin{align*}
\PP\le[\sup_{t\leq \delta n}|\tilde{\xi}_1(t) - W_1(t)|>C\log n+x\ri] \leq Kn^{-\kappa}e^{-\theta x}.
\end{align*}
\end{lemma}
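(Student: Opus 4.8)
The plan is to deduce the existence and quantitative properties of $W_1$ from the classical Koml\'os--Major--Tusn\'ady (KMT) strong approximation theorem applied to the integer-time skeleton of $\xi_1$, then to pass from integer times to all of $[0,\delta n]$ by elementary tail bounds for Poisson and Gaussian increments, and finally to obtain the almost sure estimate from the displayed probability bound via Borel--Cantelli.

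\emph{Discrete skeleton and the KMT embedding.} The increments $Y_j:=\xi_1(j)-\xi_1(j-1)$, $j\geq 1$, are i.i.d.\ mean-one, unit-variance Poisson random variables, so the centered variables $Y_j-1$ have a moment generating function finite in a neighborhood of the origin, and $\tilde\xi_1(n)=\sum_{j\leq n}(Y_j-1)$ is precisely the associated centered random walk. The Hungarian construction (see \cite[Chapter 11]{EK86}) therefore yields, after possibly enlarging the probability space so that it carries $\xi_1$ together with an independent source of randomness, a standard Brownian motion $W_1$ and constants $C_0,K_0,\theta_0>0$ with
$$\PP\Big[\max_{j\leq m}|\tilde\xi_1(j)-W_1(j)|>C_0\log m+x\Big]\leq K_0 e^{-\theta_0 x},\qquad x\geq 0,\ m\geq 2.$$
Since $W_1$ is a measurable function of $\xi_1$ and of that auxiliary randomness, and the latter may be chosen independent of the whole family $(\xi_k)_{k\geq 1}$ (all of which live on the original space), $W_1$ is independent of $\{\xi_k:k\neq 1\}$, as required.

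\emph{From integer times to continuous time.} Fix $\delta,\kappa>0$ and put $m=\lceil\delta n\rceil$, so $\log m\leq\log n+O(1)$ and $m\leq 2\delta n$ for $n$ large. For $t\in[j-1,j]$ with $1\leq j\leq m$,
$$|\tilde\xi_1(t)-W_1(t)|\leq|\tilde\xi_1(j-1)-W_1(j-1)|+\sup_{j-1\leq s\leq j}|\tilde\xi_1(s)-\tilde\xi_1(j-1)|+\sup_{j-1\leq s\leq j}|W_1(s)-W_1(j-1)|.$$
The first term is handled by the skeleton bound; the second is at most $Y_j+1$, a shifted Poisson variable with a super-exponential tail; the third has a Gaussian tail by the reflection principle. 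On the event that the left-hand side exceeds $C\log n+x$ for some such $t$, one of the three terms exceeds $\tfrac13(C\log n+x)$, and a union bound over the $m$ indices $j$ costs a factor linear in $n$. This polynomial loss is absorbed into the target rate by first fixing $\kappa$, then taking $\theta>0$ small and $C>0$ large: the first two terms carry at least exponential tails, so combining them with the factor $m$ still leaves a bound of the form $Kn^{-\kappa}e^{-\theta x}$ once $C$ is large enough, while the Gaussian third term decays faster than any power of $n$. Collecting the contributions (and enlarging $K$ to absorb the finitely many small $n$) gives
$$\PP\Big[\sup_{t\leq\delta n}|\tilde\xi_1(t)-W_1(t)|>C\log n+x\Big]\leq K n^{-\kappa}e^{-\theta x}.$$

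\emph{The almost sure bound, and the main obstacle.} Applying the last display with $\delta=1$, $\kappa=2$, $x=0$, $n=2^{p+1}$ shows that the probability that $\sup_{t\leq 2^{p+1}}|\tilde\xi_1(t)-W_1(t)|$ exceeds $C(p+1)\log 2$ is summable in $p$, so by Borel--Cantelli this occurs for only finitely many $p$; since $\log(2\vee t)\geq p\log 2$ for $t\in[2^{p},2^{p+1}]$ ($p\geq1$), the ratio $|\tilde\xi_1(t)-W_1(t)|/\log(2\vee t)$ is eventually at most $C(p+1)/p\leq 2C$, giving $\sup_t|\tilde\xi_1(t)-W_1(t)|/\log(2\vee t)<\infty$ a.s. The main obstacle is precisely the bookkeeping in the interpolation step: the union bound over the $O(n)$ unit intervals introduces a polynomial factor that must not spoil the $n^{-\kappa}$ decay, which forces the constants to be chosen in the correct order ($\kappa$ first, then $\theta$ small, then $C$ large). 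The KMT embedding itself is used as a black box; the only delicate point there is to route its internal randomization through a source independent of $(\xi_k)_{k\geq1}$ so that the independence of $W_1$ from the remaining Poisson processes is preserved.
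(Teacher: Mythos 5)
Your argument is correct and is exactly the standard route: the paper itself gives no proof of this lemma, simply citing Ethier--Kurtz, Chapter 11, Section 3, and that reference establishes the result by the same KMT strong approximation of the integer-time random walk skeleton, interpolation over unit intervals via Poisson and Gaussian oscillation tails, and Borel--Cantelli for the almost sure logarithmic bound. Your handling of the union-bound bookkeeping and of the independence of the auxiliary randomization from the remaining $\xi_k$ fills in precisely the details the citation leaves implicit.
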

%For proof of the Lemma \ref{BMconst},  see  \cite{EK86}.} 

\np
Let 
\begin{align}\label{Rkdef}
\SC{R}_k&=\{i: \nu_{ik}\neq 0\}.
\end{align}
Notice that for each $k$, $\SC{R}_k$ keeps track of the species involved in reaction $R_k$.
We are now ready to state our main result.

\begin{theorem}
\label{thm:diffussion}
Let $X^N$ be given by \eqref{RTM1} and $Z^N$ by
\begin{align*}
Z^N(t) &= X^N(0) + N^{\rho_1} \int_0^t \l_1(Z^N(s) )ds\:\nu^N_1 + W_1(N^{\rho_1} \int_0^t \l_1(Z^N(s) )ds)\nu^N_1\\
&\hspace{.4cm}+\sum_{k>1} \xi_k(N^{\rho_k} \int_0^t \l_k(Z^N(s)) ds  )\ \nu^N_k,
\end{align*}
where $W_1$ is a standard Brownian motion independent of the $\xi_k$ as given by Lemma \ref{BMconst}.
Assume that the $\l_k$ are Lipschitz continuous 
with Lipschitz constant $L_k$ and  $\displaystyle \sup_x \l_k(x) \leq \bar{\l}_k$.
Let $|\nu^N_k| =O(N^{-m_k}) = O(\displaystyle \sum_{i\in \SC{R}_k}\f{1}{N^{\alpha_i}}).$ Then,
\begin{equation}
\label{arnab_error_bound1}
\sup_{t\leq T}\EE|X^N(t) - Z^N(t)| \leq C_N (C'\log N^{\rho_{1}}/N^{m_1} + K''/N^{2\rho_1+m_1}),
\end{equation}
where $C_N$ is a constant  which remains the same no matter which reaction is simulated by a diffusion approximation.
\end{theorem}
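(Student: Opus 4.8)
\medskip
\noindent\textbf{Proof plan.}
The plan is to bound $e(t):=\EE|X^N(t)-Z^N(t)|$ by a Gronwall argument after isolating the ``irreducible'' contribution of the strong approximation in Lemma~\ref{BMconst}. First I would write $\tau_k^X(t)=N^{\rho_k}\int_0^t\l_k(X^N(s))\,ds$ and $\tau_k^Z(t)=N^{\rho_k}\int_0^t\l_k(Z^N(s))\,ds$; since $\sup_x\l_k(x)\leq\bar\l_k$, both are bounded by $N^{\rho_k}\bar\l_k T$ on $[0,T]$ and are multi-parameter stopping times in the sense of \cite[Chapter~6]{EK86}. Subtracting the defining equations of $X^N$ and $Z^N$ gives $X^N(t)-Z^N(t)=\sum_{k=1}^{\Nreact}D_k(t)\,\nu^N_k$, where $D_1(t)=\xi_1(\tau_1^X(t))-\tau_1^Z(t)-W_1(\tau_1^Z(t))$ and $D_k(t)=\xi_k(\tau_k^X(t))-\xi_k(\tau_k^Z(t))$ for $k>1$, so that $e(t)\leq\sum_{k}|\nu^N_k|\,\EE|D_k(t)|$.

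For $k>1$ the term is handled directly: by \eqref{fact0} and Lipschitz continuity of $\l_k$, $\EE|D_k(t)|=\EE|\tau_k^X(t)-\tau_k^Z(t)|\leq N^{\rho_k}L_k\int_0^t e(s)\,ds$. For $k=1$ the decisive move is to insert $\pm\,\xi_1(\tau_1^Z(t))$ (\emph{not} $\pm\,W_1(\tau_1^X(t))$), which avoids ever producing a Brownian increment over a solution-dependent time interval:
\[
D_1(t)=\bigl(\xi_1(\tau_1^X(t))-\xi_1(\tau_1^Z(t))\bigr)+\bigl(\tilde\xi_1(\tau_1^Z(t))-W_1(\tau_1^Z(t))\bigr).
\]
The first summand contributes $N^{\rho_1}L_1\int_0^t e(s)\,ds$ exactly as before via \eqref{fact0}. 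For the second, I would use $\tau_1^Z(t)\leq N^{\rho_1}\bar\l_1 T$ to dominate it by $\sup_{u\leq N^{\rho_1}\bar\l_1 T}|\tilde\xi_1(u)-W_1(u)|$; taking expectations and integrating the polynomial--exponential tail bound of Lemma~\ref{BMconst} (applied with $\kappa=2$ and $\delta=\bar\l_1 T$, $n=\lceil N^{\rho_1}\rceil$) gives, for $N$ large, $\EE|\tilde\xi_1(\tau_1^Z(t))-W_1(\tau_1^Z(t))|\leq C'\log N^{\rho_1}+K''N^{-2\rho_1}$.

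Collecting everything yields, for all $t\leq T$,
\[
e(t)\leq A_N\int_0^t e(s)\,ds+|\nu^N_1|\bigl(C'\log N^{\rho_1}+K''N^{-2\rho_1}\bigr),\qquad A_N:=\sum_{k=1}^{\Nreact}|\nu^N_k|\,N^{\rho_k}L_k ,
\]
and after checking that $\sup_{t\leq T}e(t)<\infty$ a priori (immediate from the random time change representations), Gronwall's inequality gives $e(t)\leq e^{A_N T}|\nu^N_1|(C'\log N^{\rho_1}+K''N^{-2\rho_1})$. Using $|\nu^N_1|=O(N^{-m_1})$ this is exactly \eqref{arnab_error_bound1} with $C_N:=e^{A_N T}$ (up to the absolute constant in $|\nu^N_1|=O(N^{-m_1})$). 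The final claim follows because $A_N$ is a sum over \emph{all} reactions in which the diffusion-approximated reaction $R_1$ enters through the term $|\nu^N_1|N^{\rho_1}L_1$, identical to the term it would contribute as a jump reaction; hence $C_N$ is unchanged if a different reaction is selected for the diffusion approximation, while the forcing term (which carries $\rho_1$ and $m_1$) is what singles out the best choice.

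The step I expect to be the main obstacle is the measurability bookkeeping behind the applications of Lemma~\ref{lemma0}/\eqref{fact0} in the mixed terms: $\tau_1^X(t)$ depends on $\xi_1$ whereas $\tau_1^Z(t)$ does not, so one must exhibit a single filtration with respect to which $\xi_1$ is a martingale and both $\tau_1^X(t)$ and $\tau_1^Z(t)$ are (multi-parameter) stopping times --- achieved by enlarging $\{\SC{G}_t\}$ in \eqref{G-filt} with $\s(W_1,\xi_k:k\neq1)$ and exploiting independence. A secondary, routine point is the passage from the tail estimate of Lemma~\ref{BMconst} to the $L^1$ bound on the running supremum of $\tilde\xi_1-W_1$.
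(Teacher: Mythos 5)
Your proposal is correct and follows essentially the same route as the paper's proof: decompose the difference reaction by reaction, control the pure-jump terms via the optional-sampling identity \eqref{fact0}, control the Poisson-versus-Brownian discrepancy at the bounded random time $N^{\rho_1}\int_0^t\l_1(Z^N(s))\,ds$ via the strong-approximation tail bound of Lemma \ref{BMconst}, and close with Gronwall. Your insertion of $\pm\,\xi_1(\tau_1^Z(t))$ is a slightly more economical version of the paper's two-step split (drift difference plus centered-Poisson difference, which costs an extra constant), and your remark about enlarging the filtration by $\s(W_1,\xi_k:k\neq 1)$ addresses a measurability point the paper leaves implicit.
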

\begin{proof}
Notice that
for $i=1,2,3$
\begin{align*}
Z^N_i(t) &= X^N_i(0) + N^{\rho_1} \int_0^t \l_1(Z^N(s)) ds\:\nu^N_{1i} + W_1(N^{\rho_1} \int_0^t \l_1(Z^N(s) )ds)\nu^N_{1i}\\
& \hspace{.4cm}+\sum_{k>1} \xi_k(N^{\rho_k} \int_0^t \l_k(Z^N(s))ds ) \ \nu^N_{ki},
\end{align*}
and for $i>3$
\begin{align*}
Z^N_i(t) = X^N_i(0) + \sum_{k>1} \xi_k(N^{\rho_k} \int_0^t \l_k(Z^N(s) )ds )\ \nu^N_{ki}.
\end{align*}
Let $\tilde{\xi}$ denote the centered Poisson process. Observe that for $i=1,2,3$,
\begin{align}
\non
|X^N_i(t) -Z^N_i(t)| &\leq N^{\rho_1}|\int_0^t (\l_1(X^N(s)) - \l_1(Z^N(s)) )\ ds| |\nu^N_{1i}|\\
\non
&\hspace{.4cm} + | \tilde{\xi}_1(N^{\rho_1} \int_0^t \l_1(X^N(s) )ds ) -  W_1(N^{\rho_1} \int_0^t \l_1(Z^N(s) )ds)|\ |\nu^N_{1i}|\\
\non
&\hspace{.4cm}+ \sum_{k>1} |\xi_k(N^{\rho_k} \int_0^t \l_k(X^N(s)) ds ) -\xi_k(N^{\rho_k} \int_0^t \l_k(Z^N(s)) ds )  |\ |\nu^N_{ki}|\\
& = A+B+C.
\label{react_changed}
\end{align}
Note that by \eqref{fact0}
\begin{align*}
\EE[C] &\leq \sum_{k>1}N^{\rho_k} |\nu^N_{ki}| \EE |\int_0^t (\l_k(X^N(s))-\l_k(Z^N(s)) )\ ds|\\
&\leq \sum_{k>1}N^{\rho_k} |\nu^N_{ki}| L_k \int_0^t \EE|X^N(s) - Z^N(s)| ds,
\end{align*}
where $L_k$ is the Lipschitz constant for $\l_k$.
Also, it is immediate that 
$$\EE[A] \leq N^{\rho_1}|\nu^N_{1i}|L_1\int_0^t \EE|X^N(s) - Z^N(s)| ds.$$
Next, observe that
\begin{align*}
B&\leq  | \tilde{\xi}_1(N^{\rho_1} \int_0^t \l_1(X^N(s) )ds) -  \tilde{\xi}_1(N^{\rho_1} \int_0^t \l_1(Z^N(s)) ds)|\ |\nu^N_{1i}|\\
&\hspace{.4cm}+  | \tilde{\xi}_1(N^{\rho_1} \int_0^t \l_1(Z^N(s) )ds ) -  W_1(N^{\rho_1} \int_0^t \l_1(Z^N(s))ds)|\ |\nu^N_{1i}|\\
& = I + II.
\end{align*}
It is easy to see that for some constant $C$,
$$\EE[I] \leq C N^{\rho_1}|\nu^N_{1i}|L_1\int_0^t \EE|X^N(s) - Z^N(s)| ds.$$
By Lemma \ref{BMconst}, there exist constants $\gamma, K', C'>0$ such that 
\begin{align*}
II/|\nu^N_{1,i}|&\leq \sup_{s\leq\bar{\l}_1 N^{\rho_1}t} |\tilde{\xi}_1(s)-W_1(s)|\leq C'\log N^{\rho_1} + \Lambda_N,
\end{align*}
where $\PP[\Lambda_N >x]\leq K'e^{-\kappa x}/N^{2\rho_1}$. Notice that 
$$\EE[\Lambda_N] = \int_0^\infty \PP[\Lambda_N>x]\  dx \leq K'/\kappa  N^{2\rho_1}.$$
It follows that
\begin{align*}
\EE[II] \leq( C'\log N^{\rho_1} + K''/N^{2\rho_1})|\nu^N_{1i}|.
\end{align*}

For $i>3$, we have
\begin{align}
\non
\EE|X^N_i(t) -Z^N_i(t)| &\leq \sum_{k>1}\EE |\xi_k(N^{\rho_k} \int_0^t \l_k(X^N(s)) ds ) -\xi_k(N^{\rho_k} \int_0^t \l_k(Z^N(s)) ds )  |\ |\nu^N_{ki}|\\
& \leq \sum_{k>1}N^{\rho_k}|\nu^N_{ki}|\EE\int_0^t |\l_k(X^N(s)) - \l_k(Z^N(s))| ds.
\label{react_unchanged}
\end{align}

\np
It follows  from \eqref{react_changed} and \eqref{react_unchanged} that after summing over $i$
\begin{align*}
\EE|X^N(t) - Z^N(t)| &\leq  (\displaystyle\sum_{k=1}^{R}N^{\rho_k} |\nu^N_{k}| L_k+N^{\rho_1}L_1|\nu^N_{1}|C )\int_0^t \EE|X^N(s) - Z^N(s)| ds \\
&\hspace{.4cm} + ( C'\log N^{\rho_1} + K''/N^{2\rho_1})|\nu^N_{1}|,
\end{align*}
where $|\nu^N_k| = \displaystyle\sum_i |\nu^N_{ki}|$ is the $l_1$ norm of $\nu^N_k$. Since $|\nu^N_{k}| = O(1/N^{m_k})$,  Gronwall's inequality implies 
\begin{equation}
\label{error:sde}
\EE|X^N(t) - Z^N(t)| \leq C_N (C'\log N^{\rho_1}/N^{m_1} + K''/N^{2\rho_1+m_1}),
\end{equation}
where $C_N =\exp(\displaystyle\sum_{k=1}^{R}N^{\rho_k} |\nu^N_{k}| L_kt+N^{\rho_1}L_1|\nu^N_{1}|C t) \leq \exp(\displaystyle 2\sum_{k=1}^{R}N^{\rho_k} |\nu^N_{k}| L_kt).$ 
\end{proof}

\begin{remark} {\rm
 Typically, for biochemical systems the propensities $a_k$ and hence the $\l_k$ will not be bounded, a condition required in Theorem \ref{thm:diffussion}. However, for most practical purposes our simulation takes place in a bounded domain, that is, the simulation is stopped if the number of molecules exceed a certain quantity. Hence, the assumption that the propensity functions are bounded remains valid. Specifically, one way to incorporate this feature in our model is to multiply the original propensity function by a cutoff function ensuring that the changed propensity function vanishes outside a bounded region.
 }
\end{remark}

\begin{remark}{\label{remark1}\rm
For simulation purposes, it is often difficult to estimate the strong error $\EE|X^N(t)  -Z^N(t)|$, as it requires proper utilization of the coupling between $X^N$ and $Z^N$. It is often more convenient to look at a weak error which compares the error between the marginal distribution of $X^N$ and that of $Z^N$ at a time $t$. For example, depending on the need, a practitioner might want  to estimate the weak error given by $|\EE(X^N(t)) - \EE(Z^N(t))|$. While this weak error compares the average values of the exact and the approximating processes, it does not shed light on the error at the level of the corresponding probability distributions. This can be accurately captured by the Wasserstein distance
\begin{align*}
d_W(X^N(t),Z^N(t)) &= \sup_{f}\big\{|\EE(f(X^N(t)) -\EE(f(Z^N(t))|: f:\R_+^M\rt \R_+, \\
&\hs{2cm}  Lip(f)\leq 1\big\}.
\end{align*}
Here, the supremum is taken over all Lipschitz continuous $f$, and $Lip(f)$ denotes the corresponding Lipschitz constant.  It is immediate that
$$d_W(X^N(t),Z^N(t)) \leq\EE|X^N(t)  -Z^N(t)|,$$
and the latter quantity can be bounded by the error bound obtained in Theorem \ref{thm:diffussion}.

}
\end{remark}

\section{Simulation Algorithms}
\label{sec:algorithm}
 \vs{.2cm} 
 The goal of this section is to use the obtained error bound for constructing a fast  algorithm for the dynamic partitioning of the reaction set. 
 The objective of such an algorithm is to implement a proper protocol for simulating the fast reactions by diffusion approximations and to switch back to the original exact Gillespie simulation when the conditions for  approximation are not met. Furthermore, switching back and forth between exact simulation and diffusion approximation of appropriate reactions will be done dynamically over the course of time.

The error bound in Section 
\ref{sec:model}, was calculated under the assumption that the system consists of a single fast reaction which was then approximated by a diffusion approximation. This can easily be generalized to a system consisting of more than one fast reaction. Analysis of the error bound in \eqref{arnab_error_bound1} reveals that it consists of products of two terms, the first term being a constant (in the sense that its value does not depend on which reaction is approximated by diffusion term), while the second term explicitly captures the effect  of the specific reaction being approximated. Consequently, it is the second part of this error bound which is utilized in partitioning the reaction set into the fast and the slow reaction set. The specific mathematical details are outlined below.

It should be noted that the scaling constants $N$ and the exponents $\rho_k, m_k$, appearing in \eqref{arnab_error_bound1}, are determined based on the starting initial state of the system. Hence,  a key step in the  development of an effective algorithm involves  rewriting (\ref{arnab_error_bound1}) in terms 
of the propensity functions and the number of molecules of species.

Assume that the reaction \(R_k\) is simulated 
by a diffusion approximation and define  
\begin{equation}
\label{error_sde1}
\Upsilon_k= C'\log N^{\rho_{k}}/N^{m_k} + K''/N^{2\rho_k+m_k}.
\end{equation}
Recall that
\[a_k(X)=N^{\rho_k}\lambda_k(X^{N}),\]
where \(\lambda_k(.)=O\left(1\right)\). Therefore,  
\begin{equation}
\label{propensity1}
a_k(X)=O(N^{\rho_k}) .
\end{equation} 
Also,  \(|\nu^N_k| =O(N^{-m_k}) = O(\displaystyle\sum_{i\in \SC{R}_k}    \f{1}{N^{\alpha_i}} )\).
Since, $\bar{X}^N_i = X_i/N^{\alpha_i}$ where  
$\bar{X}^N_i = O(1)$, we have 
\(\displaystyle\f{1}{X_i}=O(\displaystyle\f{1}{N^{\alpha_i}} )\)
and
\begin{equation}
\label{propensity2}
 O(N^{-m_k}) =O(\displaystyle \sum_{i\in R_k } \f{1}{X_i}) .
\end{equation} 
Ignoring the   constants \(C'\), \(K''\) and using
(\ref{error_sde1}), (\ref{propensity1}) and (\ref{propensity2}) 
it follows that the effect of simulating the reaction $R_k$ by a diffusion approximation is essentially captured by  
\begin{equation}
\label{error_bound1}
\hat{\Upsilon}_k\equiv \displaystyle\sum_{i\in \SC{R}_k }\f{\log a_k(X)}{X_i} +\f{1}{a^{2}_k(X)X_i},
\end{equation}
where $\SC{R}_k$ was defined in \eqref{Rkdef}.  
%{\color{blue}In the present paper, we call a species ``involved''  in a reaction if it is consumed, produced and  although it is one of the reactants of the reaction,  it is neither produced nor consumed. }
Now given a threshold $\varepsilon$, a reaction $R_k$ is classified as a fast reaction and simulated by diffusion approximation if $\hat{\Upsilon}_k\leq \varepsilon$. $\hat{\Upsilon}_k$ can be considered as an effective estimate of the quantity $\Upsilon_k$ calculated using the  ``initial condition'' of the state. This approach, in particular, introduces a systematic way of partitioning the reaction set into fast and slow reactions based on a user defined threshold over the course of time. The resulting algorithm is outlined in details in Algorithm \ref{sumalgo}.

\begin{algorithm}[H]
\DontPrintSemicolon
\KwIn{The state vector \(X\),  the error bound \( \varepsilon\), 
a discretization time step \(\Delta\), stoichiometric matrix  \(S=\{\nu_{ij}\}\), \(i=1,2,\ldots, \Nsp,\:j=1,2,\ldots,\Nreact\), 
a positive number \( P \)  to check repartitioning of reactions, end of the simulation time 
 \( T >0\). }
\KwOut{ The number of molecules of each species in time interval \( t \in [0,T] \). }

Set \(t=0,n=0\).\;
Calculate  \(\hat{\Upsilon}_k\) for all reactions  by using  equation (\ref{error_bound1}).\; 
Partition the reaction set into \(\mathcal{C}\) (continuous) and \(\mathcal{D}\)  (discrete) sets such that for each  \( R_k\in \mathcal{C}\),
\(\hat{\Upsilon}_k \leq \varepsilon\) and  for each  \(R_m\in \mathcal{D}\),
\(\hat{\Upsilon}_m > \varepsilon\) \label{partition}.\;
For each \(R_m\in \mathcal{D}\), set  \(T_{m}=0\).\;
For each \(R_m\in \mathcal{D}\), draw \(J_{m}\sim -\log(z),\quad z\sim \mathcal{U}(0,1)\).\;

\While{  \(t<T\) \textbf{ and } \(\displaystyle \sum_{i=1}^{R} a_{i} >0\)} {
     \( n=n+1\).\;
  
     For each \(R_m\in \mathcal{D}\), calculate \(h_m= \frac{(J_m - T_m)}{a_m}\).\;
     Choose \(\alpha\) such that \(h\equiv h_{\alpha}\equiv \displaystyle \min_{R_m\in \mathcal{D}} h_m\).\;
 
   \uIf { \(\Delta>h\) }
      {Update  \(X\) by a suitable numerical scheme for simulating the Langevin dynamics for \(R_k \in \mathcal{C} \) until the $h$.\;
        Carry out reaction \(R_\alpha\)  and update \(X=X+\nu_\alpha\).\;
        Update $J_\alpha = J_\alpha-\log(u)$, $u\sim \mathcal{U}(0,1).$\\
        For each $R_m\in \SC{D}$, put $T_{m}=T_{m}+a_{m}h.$\\
        \(t=t+h\).}
        \Else
        {Update  \(X\) by a suitable numerical scheme for simulating the Langevin dynamics for \(R_k \in \mathcal{C} \) until the $\Delta$.\;
       For each $R_m\in \SC{D}$, \(T_{m}=T_{m}+a_{m}\Delta\).\;
        \(t=t+\Delta\).\;
    }
Recalculate the propensities of all reactions \(a_{k}\). \;
\If{ \( n\equiv 0 \pmod{P}\)}{
Recalculate errors of all reactions \(\hat{\Upsilon}_k\).\;
Repartition reactions  as in Step \ref{partition}.
}
}
\caption{ Dynamic partitioning algorithm for hybrid diffusion model. }
\label{sumalgo}
\end{algorithm}

\section{Numerical Scheme for Hybrid Diffusion Models}
\label{sec:runge_kutta}
  \vs{.2cm}
  Suppose that the reaction set is partitioned into a set of fast reactions $\SC{C}$ and a set of slow reactions $\SC{D}$. The reactions in $\SC{C}$ will be simulated by the diffusion approximation. The resulting approximating state process is given by
\begin{equation}
\begin{array}{rcl}
\label{eq:1}
X(t)&=& X(0)+\displaystyle \sum_{\ell \in \mathcal{C}}  \int_{0}^{t} a_{\ell}(X(s))ds \nu_{\ell} +\displaystyle \sum_{\ell \in \mathcal{C}} W_{\ell}(\int_{0}^{t} a_{\ell}(X(s))ds)\nu_{\ell}\\
&+&\displaystyle \sum_{k \in \mathcal{D}} \xi_{k} (\int_{0}^{t} a_{k}(X(s))ds)\nu_{k}.
\end{array}
\end{equation}

Therefore, the evolution of the process is governed by a usual diffusion process punctuated by jumps from the slow reaction set. Specifically, let $\tau_1$ and $\tau_2$ denote two successive of reactions from $\SC{D}$. Then, for $\tau_1<t<\tau_2$,
\begin{equation}
\label{eq:2}
X(t)= X(\tau_1)+\displaystyle \sum_{\ell \in \mathcal{C}}  \int_{\tau_1}^{t} a_{\ell}(X(s))ds \nu_{\ell}+\displaystyle \sum_{\ell \in \mathcal{C}} W_{\ell}(\int_{\tau_1}^{t} a_{\ell}(X(s))ds)\nu_{\ell}.\\
\end{equation}
Since the reactants in fast reactions usually involve species with high copy numbers, it is useful to look at their  concentration $U(t)$ defined by \(U(t)=\Omega^{-1}X(t)\), 
where \(\Omega\) is the volume of the reaction compartment and $\tau_1<t<\tau_2$.

Notice that the propensity 
function of the reaction \(R_k\) satisfies  
the following relation 
\begin{equation}
\label{eq:3}
a_{k}(X)\approx \Omega \: \tilde{a}_{k}(U), 
\end{equation} 
where \(\tilde{a}_{k}: \R_{\geq 0}^{M} \rightarrow \R \) 
is the usual deterministic form of mass action \(\tilde{a}_{k}(U)=\tilde{c}_k \displaystyle \prod_{i=1}^\Nsp U_{i}^{ \nu_{ik}}\) where \(\tilde{c}_k\) denotes the deterministic rate constant. 
It should be noted that the above relation is exact for unimolecular reactions and bimolecular reactions of the type $S_1+S_2\rt *$, and for reactions of the type $2S_1\rt *$, the error is of the order $O(\Omega^{-1})$. Consequently, for $\tau_1<t<\tau_2$, $U$ satisfies
\begin{equation}
\label{eq:5}
U(t)= U(\tau_1)+\displaystyle \sum_{\ell \in \mathcal{C}} \int_{\tau_1}^{t} \tilde{a}_{\ell}(U(s))ds\nu_{\ell} +\displaystyle \sum_{\ell \in \mathcal{C}} \frac{1}{\Omega}W_{\ell}(\Omega \int_{\tau_1}^{t} \tilde{a}_{\ell}(U(s))ds)\nu_{\ell}, \\
\end{equation}
which is equivalent (in the sense of distribution) to the equation
\begin{equation}
\label{eq:6}
U(t)= U(\tau_1)+\displaystyle \sum_{\ell \in \mathcal{C}}  \int_{\tau_1}^{t} \tilde{a}_{\ell}(U(s))ds \nu_{\ell} +\displaystyle \sum_{\ell \in \mathcal{C}}\frac{1}{\sqrt{\Omega}}\int_{\tau_1}^{t}\sqrt{\tilde{a}_{\ell}(U(s))}dB_{\ell}(s)\nu_{\ell}.\\
\end{equation}
Here, the \(B_{\ell}\) are independent standard Brownian motions.  Assume that the set $\SC{C}$ has \(C\) reactions. Then, notice that
\begin{equation}
\label{eq:7}
dU_{i}=f_{i}(U)dt+\frac{1}{\sqrt{\Omega}}\displaystyle \sum_{j=1}^{C} g_{ij}(U)dB_{j},\quad i \in \SC{S}_{\SC{C}},
\end{equation}
where $\SC{S}_{\SC{C}}$ denotes the species involved in $\SC{C}$,
%\begin{equation}
%\label{eq:8}
%f_{i}(U)=S_{i} \:\tilde{a}(U),\quad g_{ij}(U)=\nu_{ij}\sqrt{\tilde{a}_{j}(U)},
%\end{equation}
\begin{equation}
\label{eq:8}
f_{i}(U)=\sum_{j=1}^{C}\nu_{ij}\:\tilde{a}_{j}(U),\quad g_{ij}(U)=\nu_{ij}\sqrt{\tilde{a}_{j}(U)}.
\end{equation}
%\( \tilde{a}(U)=( \tilde{a}_{1}(U),\tilde{a}_{2}(U),\ldots,\tilde{a}_{C}(U) )^{T} \)  
%and \(S_{i}\)
%denote the \(i\)-th  row of  
%\(M\times C\)  stoichiometric matrix   \(S=((\nu_{ij}))\) .

A trajectory of the above stochastic differential equation is simulated by using the Runge-Kutta method 
with strong  order \(2\)  as proposed in \cite{bb:96}.  The first step involves rewriting (\ref{eq:7}) in the equivalent
Stratonovich form
\begin{equation}
\label{eq:9}
\mathrm{d}U_{i}=\bar{f}_{i}(U)\mathrm{dt}+\frac{1}{\sqrt{\Omega}}\displaystyle \sum_{j=1}^{C}g_{ij}(U) \circ \mathrm{d}B_{j},
\end{equation}
where
\begin{equation}
\label{eq:10}
\bar{f}_{i}(U)=f_{i}(U)- \frac{1}{2\sqrt{\Omega}}\displaystyle \sum_{j=1}^{M}\sum_{k=1}^{C}g_{jk}(U) 
\frac{\partial g_{ik}(U)}{\partial U_{j}},
\end{equation}
with the symbol \(\circ\) denoting the Stratonovich integral. The SDE  can be represented in matrix form as
\begin{equation}
\label{eq:11}
\mathrm{d}U=(S\tilde{a}(U)- \frac{1}{2 \sqrt{\Omega}} S h(U)) dt+ \frac{1}{\sqrt{\Omega}} S\gamma(U) \circ \mathrm{d}B,
\end{equation}
where \( \tilde{a}(U)=( \tilde{a}_{1}(U),\tilde{a}_{2}(U),\ldots,\tilde{a}_{C}(U) )^{T} \), \( B=( B_{1},B_{2},\ldots,B_{C} )^{T} \),  \(\gamma(U)=\mathrm{diag}(\sqrt{\tilde{a}_{1}(U)},\ldots,\sqrt{\tilde{a}_{C}(U)})\), 
and the entries of the vector \(h(U)\) are given by
\[h_{k}(U)=\displaystyle \sum_{j=1}^{M} \frac{\partial \, \tilde{a}_{k}(U)}{\partial U_{j}}\nu_{jk},\:\: k=1,2,\ldots,C.\]
Notice that  
for unimoleculer reactions \(S_{k} \rt *\), we have
\[\ \frac{\partial \, \tilde{a}_{k}(U)}{\partial U_{j}}=\begin{cases} \tilde{c}_{k} & \mbox{ if } j=k\\
0 & \mbox{ if } j \neq k,\end{cases}\]
and for bimolecular reactions \(S_{k}+S_{i} \rt *\), we obtain 
\[ \frac{\partial \,  \tilde{a}_{k}(U)}{\partial U_{j}}=\begin{cases} \tilde{c}_{k}U_{k} & \mbox{ if } j=i\\
\tilde{c}_{k}U_{i} & \mbox{ if } j=k.\end{cases}\]
Given \(U(\tau_{1})\) as the approximate solution at time \(\tau_{1} \),
the four stages explicit Runge-Kutta method with strong order \(2\)
for the Stratonovich problem  (\ref{eq:11}) gives  the following intermediate values 
\(I_{s}, s=1,2,3,4,\)
\begin{equation}
\label{eq:30}
\begin{array}{rcl}
I_{s}&=&U(\tau_{1})+h\displaystyle \sum_{j=1}^{s-1} A_{sj}( S\tilde{a}(I_{j})-\frac{1}{2 \sqrt{\Omega} }S h(I_{j}) )\\
&+&\displaystyle \sum_{j=1}^{s-1} \frac{1}{\sqrt{\Omega}} S \gamma(I_{j}) (B_{sj}^{(1)}J_{1}+B_{sj}^{(2)}\frac{J_{10}}{h}),
\end{array}
\end{equation}
from which the approximate solution at  \(\tau_{1}+h\) is formed: 
\begin{equation}
\label{eq:31}
\begin{array}{rcl}
U(\tau_{1}+h)&=&U(\tau_{1})+h\displaystyle \sum_{j=1}^{4}\mu_{j}(S\tilde{a}(I_{j})-\frac{1}{2\sqrt{\Omega}}S h(I_{j}))\\ &+&\displaystyle \sum_{j=1}^{4} \frac{1}{\sqrt{\Omega}} S \gamma(I_{j}) (\eta_{j}^{(1)} J_{1}+\eta_{j}^{(2)} \frac{J_{10}}{h}).
\end{array}
\end{equation}
Here, \(A=\{A_{sj}\}\), \(B^{(k)}=\{B^{(k)}_{sj}\}\), \(s,j \in \{1,2,3,4\}\) are \(4\times 4\)
matrices of real elements, \(\mu=\{\mu_{j}\}\), 
\(\eta^{(k)}=\{\eta^{(k)}_{j}\}\), 
\( k=1,2,\: j=1,2,3,4\)
are row vectors of real elements and \(J_{1},\:J_{10} \) 
are \(C \times 1\) column vectors of real elements. 
Here, \(J_{1}\) denotes the Wiener increment and 
\(J_{10}\) is an approximation for the  
Stratonovich multiple integral 
\(J=\int \int \circ dB ds\). For details  of the Runge-Kutta method with strong order \(2\) and the expressions for the terms \(A\), \(B^{(k)}\), \(\mu\), \(\eta^{(k)}\), \(J_{1}\),  \(J_{10}\), \(k=1,2\), see \cite{bb:96,kp:92}. 

The prescribed scheme has strong order \(2\) but imposes a fixed stepsize that need to be chosen very small if the system is stiff. For such stiff problems,  we  employ a heuristic to adaptively choose the stepsizes of the Runge-Kutta method
for a certain integration accuracy. In order to estimate the stepsizes of  the Runge-Kutta method for stochastic differential equations (SDEs), we follow the adaptive stepsize control for the classical  Runge-Kutta methods for ODEs. To determine the stepsizes of the Runge-Kutta method with strong order \(2\) for the Stratonovich problem (\ref{eq:11}), we use just the drift term of (\ref{eq:11}).\footnote{Note that in the Stratonovich form the drift term shows a volume dependency.}More specifically, we choose an initial step \(h\) and compute two approximate solutions of  the drift term 
with stepsizes \(h\) and \(h/2\) by using the fourth order classical Runge-Kutta method for ODEs. If the difference of the approximations is smaller than a given tolerance,
then we choose \(h\) as the stepsize  and compute the approximate solution of the whole SDE  given by (\ref{eq:11}) by using the Runge-Kutta method  with strong order \(2\). We refer the reader to  \cite{hnw:93}, 
for more details on the adaptive stepsize algorithms for ODEs.

 \section{Conversion to Differential Algebraic Form}
\label{sec:conservation}
  \vs{.2cm}
Mass conservation relations play an important role in biochemical reaction systems. In many models the reaction dynamics  dictates 
conservation of the total amount of two or more species over the course of time.  For example, in the Michaelis-Menten model  the total quantity of the enzyme and the enzyme-substrate complex is always conserved (see Section \ref{application:MM1}). 
These constraints can be defined by algebraic equations. 
As a result, the dynamics of the systems under consideration can be expressed 
by differential algebraic forms that more generally,  preserve the symplectic structure of the model 
on the constraint manifold \cite{hlw:06,hw:96}. These algebraic relations
leads to reduction in the dimensionality of the equation set, which in turn speeds up the simulation. The procedure is detailed below.

The algebraic constraints indicate that  \(r \equiv\mbox{Rank}\left\{S\right\}<M\), where $M$ is the number of species in the reaction system.  The next step involves converting the stoichiometric matrix $S$ into a reduced echelon form by Gauss-Jordan method (see \cite{ch:02,sl:04} for more ways of reconstructing the equations describing the dynamics of such  a reaction system). Specifically, the method gives a permutation matrix  \(E \in \N^{M\times M}\)  (that is, $E$ is a product of elementary matrices) such that

\begin{equation}
\label{eq:42}
E S=\left[
    \begin{array}{c}
      E_{I} \\% \\ \hhline{-}
       E_{D}\\
       \end{array}
  \right] S =\left[
    \begin{array}{c}
      S_{I} \\ %\hhline{-}
         0 \\
       \end{array}
  \right],
\end{equation}
where \(E_{I}, E_{D} \) are \(r  \times M \)
and \(\left(M- r \right) \times M\) matrices, respectively. 
Notice that  \(S_{I}\equiv E_{I}S\)  has rank $r$. 
Here, \(E_{D}S=0\) means that \(E_{D}\) can be thought of as the conservation matrix. 
By \eqref{eq:11}, we have 
\begin{equation}
\label{eq:49}
E \mathrm{d}U=E (S\tilde{a}(U)- \displaystyle \frac{1}{2 \sqrt{\Omega}} S h(U)) \mathrm{d}t+\frac{1}{\sqrt{\Omega}}E S \gamma (U)\circ \mathrm{d}B, 
\end{equation}
and hence,
\begin{eqnarray*}
\left[
    \begin{array}{c}
       \mathrm{d}U_{I} \\
         \mathrm{d}U_{D}\\
       \end{array}
  \right]&\equiv & \left[\begin{array}{c}
       E_I \mathrm{d}U \\
         E_D \mathrm{d} U\\
       \end{array}
  \right]\\
&=&\left[
    \begin{array}{c}
       E_{I}(S\tilde{a}(U)- \displaystyle \frac{1}{2 \sqrt{\Omega}} S h(U))\mathrm{d}t+\displaystyle \frac{1}{\sqrt{\Omega}} E_{I} S \gamma(U)\circ \mathrm{d}B \\
        E_{D}(S\tilde{a}(U)- \displaystyle \frac{1}{2 \sqrt{\Omega}} S h(U))\mathrm{d}t+\displaystyle \frac{1}{\sqrt{\Omega}} E_{D} S \gamma(U)\circ \mathrm{d}B \\
       \end{array}
  \right]\\
&=&\left[
    \begin{array}{c}
       E_{I}(S\tilde{a}(U)- \displaystyle \frac{1}{2 \sqrt{\Omega}} S h(U))\mathrm{d}t+\displaystyle \frac{1}{\sqrt{\Omega}} E_{I} S \gamma(U)\circ \mathrm{d}B \\ \\
         0 \\
       \end{array}
  \right].
\end{eqnarray*}

It follows that \eqref{eq:11} can be reduced to
\begin{subequations}
\begin{eqnarray}
\mathrm{d}U_{I}&=&E_{I}(S\tilde{a}(U)- \frac{1}{2 \sqrt{\Omega}} S h(U))  \mathrm{d}t+\displaystyle \frac{1}{\sqrt{\Omega}} E_{I} S \gamma(U)\circ \mathrm{d}B  \label{eq:511}\\
U_{D} &=&E_{D}U=C\label{eq:512},
\end{eqnarray}
where $C$ is a constant with respect to time.
\end{subequations}
Now writing
\begin{align*}
E^{-1}&=\left[
    \begin{array}{c}
       \Psi \\ %\hhline{-}
         \Phi\\
       \end{array}
  \right],
\end{align*}
we have $U = \Psi U_I+\Phi U_D = \Psi U_I+\Phi C$, where the last equality is because of \eqref{eq:512}. Consequently, $U_I$ satisfies
\begin{align*}
\mathrm{d} U_{I}&=E_{I}(S\tilde{a}(\Psi U_I+\Phi C)- \frac{1}{2 \sqrt{\Omega}} S h(\Psi U_I+\Phi C))  \mathrm{d} t\\
&\hs{0.5cm}+\frac{1}{\sqrt{\Omega}} E_{I} S \gamma(\Psi U_I+\Phi C)\circ \mathrm{d}B,
\end{align*}
and takes its values in a lower dimensional space compared to the original process $U$. The trajectories of $U_I$ can now be simulated by the Runge-Kutta method as outlined in Section \ref{sec:runge_kutta}. Specifically, given \(U_{I}(\tau_{1})\),  \eqref{eq:30} gives the following intermediate values  for the independent variables  
\begin{equation*}
\begin{array}{rcl}
I_{s}&=&U_{I}(\tau_{1})+h\displaystyle \sum_{j=1}^{s-1} A_{sj}E_{I} ( S\tilde{a}(R_{j})-\frac{1}{2 \sqrt{\Omega}} S h(R_{j}))\\
&+&\displaystyle \sum_{j=1}^{s-1} 
 \frac{1}{\sqrt{\Omega}} E_{I}  S \gamma(R_{j})(B_{sj}^{(1)} J_{1}+B_{sj}^{(2)}\frac{J_{10}}{h}),
\end{array}
\end{equation*}
where \(R_{s}=\Psi I_{s}+\Phi C,\:\: s=1,2,3,4\). Finally, we obtain the approximate values of the independent variables at time 
\(\tau_{1}+h\) as follows
\begin{equation}
\begin{array}{rcl}\label{last}
U_{I}(\tau_{1}+h)&=&U_{I}(\tau_{1})+h\displaystyle \sum_{j=1}^{4}\mu_{j} E_{I} (S\tilde{a}(R_{j})-\frac{1}{2 \sqrt{\Omega}}  S h(R_{j}))\\
&+&\displaystyle \sum_{j=1}^{4} \frac{1}{\sqrt{\Omega}} E_{I}  S \gamma(R_{j}) (\eta_{j}^{(1)} J_{1}+\eta_{j}^{(2)} \frac{J_{10}}{h}).
\end{array}
\end{equation}
Also, $U(\tau_{1}+h)$ can be easily obtained from the following equation:
\begin{equation}
\label{last_state}
U (\tau_{1}+h)=\Psi U_{I}(\tau_{1}+h)+\Phi C.
\end{equation}

\section{Applications}
\label{sec:application}
 \vs{.2cm}
In this section, the proposed algorithm from Section \ref{sec:algorithm} 
is applied to  the Michaelis-Menten kinetics, the Lotka-Volterra model and a large-scale MAPK pathway model  
together with its gene expression. The validity of the obtained theoretical error bound for the Michaelis-Menten model is substantiated empirically in Section  \ref{application:MM2}.
The enormous advantage of our hybrid algorithm over exact stochastic simulation in terms of computational efficiency will be demonstrated
in Section  \ref{application:MAPK}  by  considering  the complex MAPK pathway.

\subsection{The Michaelis-Menten Model }
\label{application:MM1}
  
\vs{.2cm}
The well known Michaelis-Menten model for enzymatic  substrate conversion consists of four species,  the enzyme (E), the substrate (S), 
the enzyme-substrate complex (ES) and the product (P). 
These 
species interact via the following reaction channels 
\begin{equation}
\label{MM_reactions}
E+ S \xrightleftharpoons[d_{1}]{c_{1}} ES, \quad ES\stackrel{c_{2}}{\longrightarrow}  E+P. 
\end{equation}
The state of the system is defined by the vector of copy numbers \(X=(E,S,ES,P)^{T}\). Notice that the following conservation laws hold
\begin{align*}
E+ES = C_1, \quad S+ES+P=C_2.
\end{align*}
Here, $C_1$ and $C_2$ are constants (with respect to time) and will be considered as dependent variables.
In our numerical simulation study, the initial 
number of molecules  is  taken as \(X(0)=\left(48,298,2,0\right)^{T}\)
and the rate constants  of  reactions \(R_1\), \(R_2\), \(R_3\) are given by \(c_{1}=0.02\) \(\mathrm{molec^{-1}s^{-1}}\), \(d_{1}=0.5\) \(\mathrm{s^{-1}}\) , \(c_{2}=0.1\) \(\mathrm{s^{-1}}\).{ \footnote {Michaelis-Menten model is a classical multi-scale problem because the reversible reaction
 \(E+ S \xrightleftharpoons[d_{1}]{c_{1}} ES\)  is much faster than the reaction \(ES\stackrel{c_{2}}{\longrightarrow}  E+P\) by orders of magnitude. This situation will lead a static partitioning of the reactions.  Therefore, we choose the rate constants of the reaction rates such that we can simulate the model with dynamic partitioning of the reactions. } }
The initial number of molecules 
reveals in the conservation constants \(C=(50,300)^{T}\). 

The system is simulated over the time interval \(\left[0 , 100\right]\) seconds,  and the fixed stepsize for simulating the continuous SDE part is taken to be  $\Delta=0.3\mathrm{s}$. 
The  relative threshold error for partitioning the 
reactions is taken to be \(\varepsilon=0.25\), and the reaction set is repartitioned after every \(P=50\) iterations of updating the state vector.
As pointed out before, the SDE part of the approximating hybrid diffusion process is simulated by
a Runge-Kutta method of strong order \(2\) as outlined in Section \ref{sec:conservation}. Figure \ref{fig:MM_dynamic}  depicts the types of reactions and  a single realization of the model when Algorithm \ref{sumalgo} is applied.  
Figure \ref{fig:qq_MM_scaled_50} compares  the probability distributions and Q-Q plots of the states  $S$ and $P$ at time $t=60\mathrm{s}$ obtained from 
simulating the exact CTMC model with Gillespie's algorithm and the hybrid diffusion model with Algorithm \ref{sumalgo} of Section \ref{sec:algorithm}.
It should be observed that the probability distributions and  Q-Q plots obtained from the approximate hybrid diffusion algorithm are remarkably close to those obtained from the exact Gillespie algorithm demonstrating the accuracy of our dynamic partitioning algorithm. Evidently, the accuracy can be further increased by lowering the threshold \(\varepsilon\) of the error bound (\ref{error_bound1}). This would lead 
 to partitioning where most reactions for most of the time are treated as discrete reactions.

%Figure \ref{fig:probability_MM_scaled_50} compares the probability distributions of the states of $S$ and $P$ at time $t=60\mathrm{s}$ obtained from 
%simulating the exact CTMC model with Gillespie's algorithm and the hybrid diffusion model with Algorithm \ref{sumalgo} of Section \ref{sec:algorithm}.
%It should be observed (see Figure \ref{fig:probability_MM_scaled_50}) that the probability distributions and  Q-Q plots obtained from the approximate hybrid diffusion algorithm are remarkably close to those obtained from the exact Gillespie algorithm %demonstrating the accuracy of our dynamic partitioning algorithm. 

%It should be observed (see Figure \ref{fig:probability_MM_scaled_50}) that the probability distributions obtained from the approximate hybrid diffusion algorithm are remarkably close to those obtained from the exact Gillespie algorithm demonstrating the %accuracy of our dynamic partitioning algorithm. This is further validated by the  corresponding Q-Q plots in Figure \ref{fig:qq_MM_scaled_50}. 

\begin{figure}[H]
\centering
\includegraphics[width=1\linewidth]{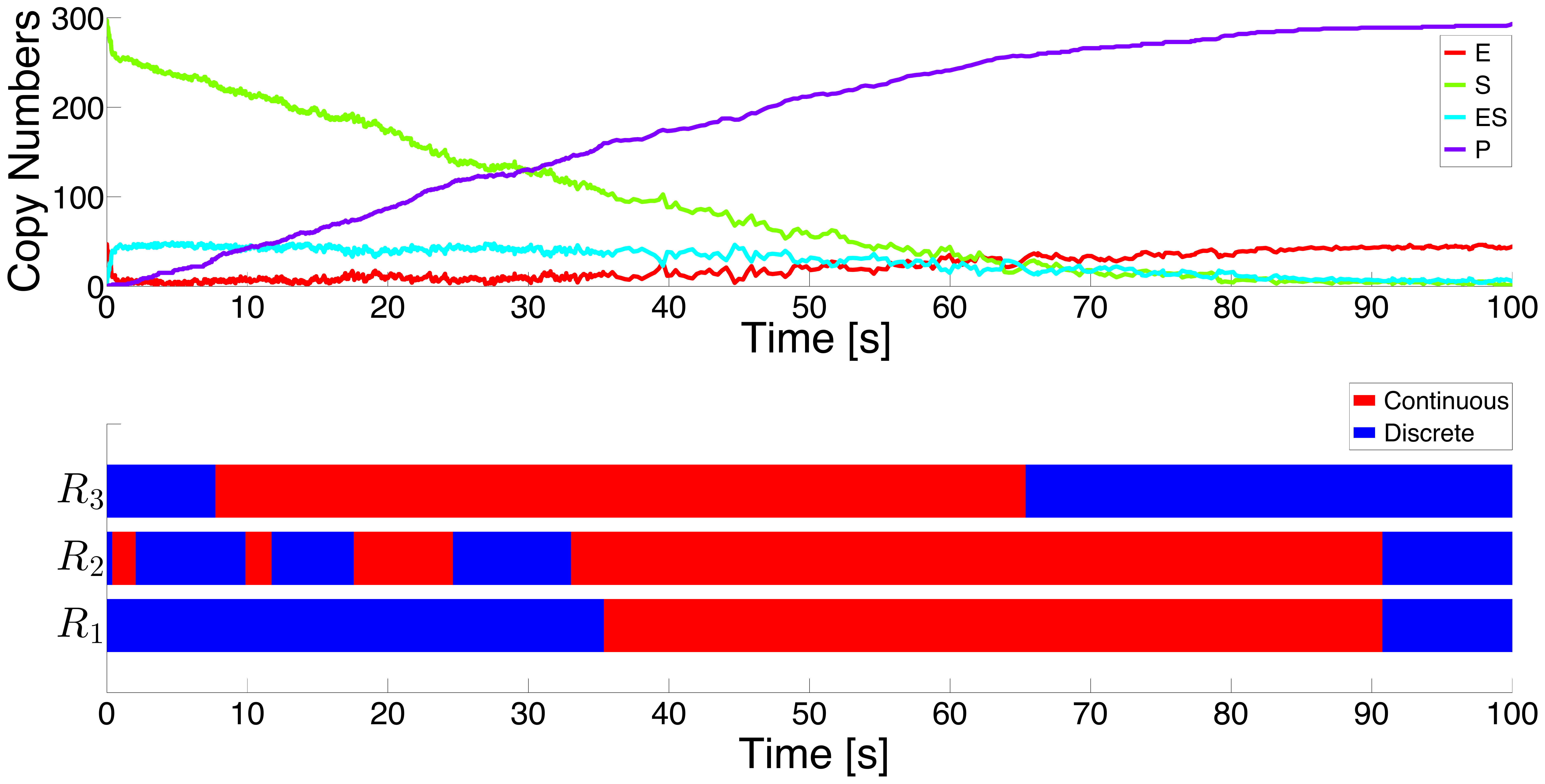}
\caption{Illustration of the hybrid diffusion algorithm with dynamic partitioning of reaction channels. 
 Upper Panel:  A single realization of the  Michaelis-Menten model when the fast reactions are modeled by diffusion approximations. Lower Panel: 
The portions of time when a reaction is treated as fast (continuous) or slow (discrete) as dictated by the error bound  (\ref{error_bound1}).\label{fig:MM_dynamic}}
\end{figure}

\begin{figure}[H]
\centering
\includegraphics[width=1\textwidth]{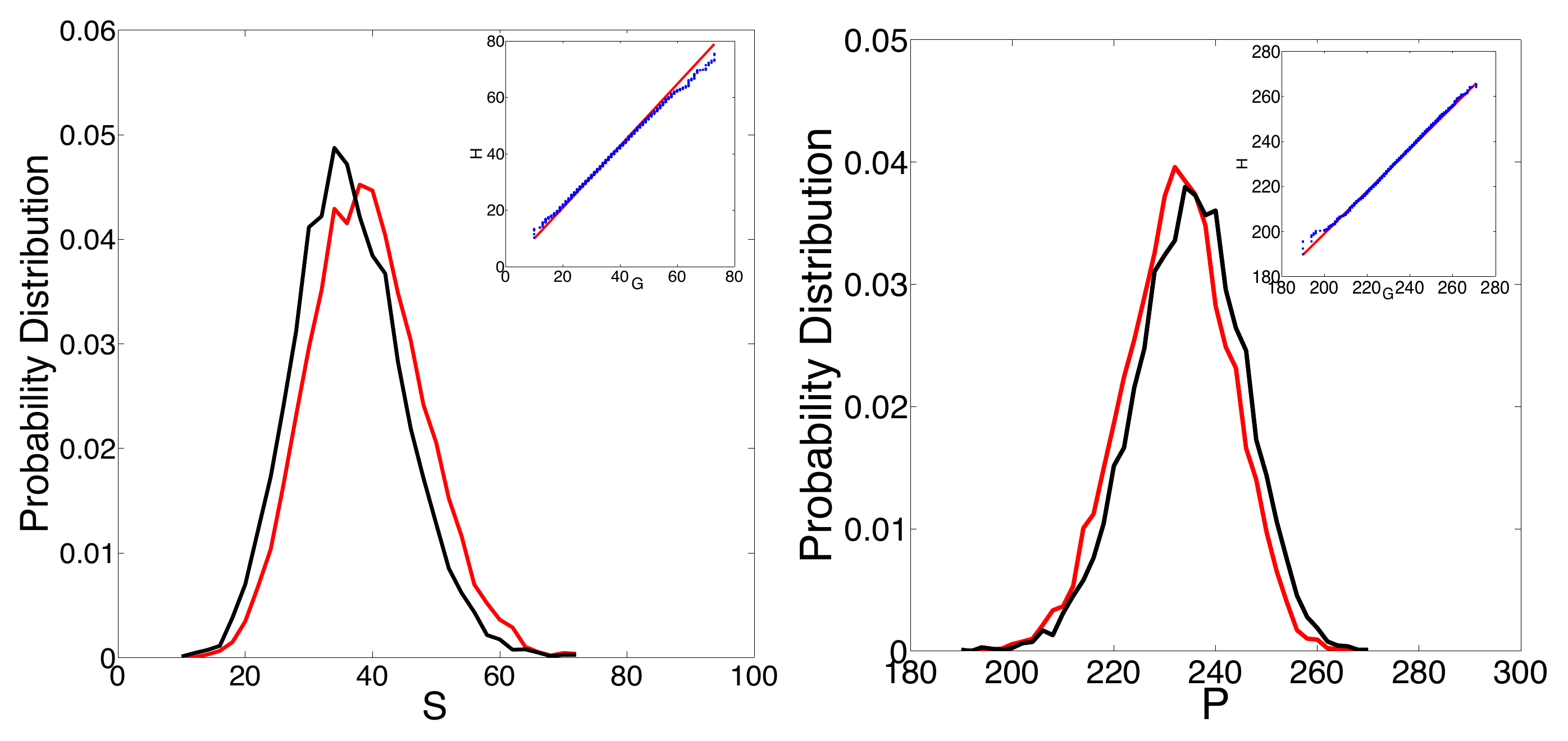}
\caption{Probability distributions of  \(S\) (left), \(P\) (right) at  time \(t=60\mathrm{s}\)  from \(30000\) samples constructed  with  (i) the Gillespie's algorithm (black line)  and by (ii) the  hybrid diffusion algorithm (red line). Insets show  Q-Q plot of   \(30000\) samples comparing the Gillespie's algorithm (G) and the hybrid diffusion algorithm (H).\label{fig:qq_MM_scaled_50} }
\end{figure}

\subsection{Validating the Error Bound}
\label{application:MM2}
  
\vs{.2cm}
Recall that the error bound obtained in Theorem \ref{thm:diffussion} has two parts of which $\Upsilon_k$, defined by \eqref{error_sde1}, captures the effect of treating reaction $R_k$ as a fast reaction and simulating it by a diffusion approximation. It should be noted that $\Upsilon_k$ depends on the initial condition. For an error bound to be meaningful for an appropriate scheme, it is desirable that it is sensitive, meaning that if the actual error increases or decreases, the bound behaves accordingly. Theoretically, in this case this means that for a fixed time interval $[0,t]$, we want $\Upsilon_k$ to be a non-decreasing function of $e_k(t) = |\EE[X^N(t)]-\EE[Z^N(t)]|$ when they both vary with respect to different initial conditions. Recall that $N$ is a scaling parameter which was determined according to a particular given initial state. The suffix $k$ is used to signify  that the reaction $R_k$ is simulated by a diffusion approximation. 
For a particular reaction system, this could be effectively checked  by plotting $\hat{e}_k$ (a Monte-Carlo estimate of $e_k$) versus $\hat{\Upsilon}_k$ (the  Monte-Carlo estimate of $\Upsilon_k$ given by \eqref{error_bound1})  for different initial conditions.

For the present Michaelis-Menten model, 
 \(R_1\) is considered as a fast (continuous) reaction while  \(R_2\) and  \(R_3\) are kept as slow (discrete)
ones (see Section \ref{ssec:sde}). The initial values of $E, ES$ and $P$ are kept fixed at \(E(0)=10\), \(ES(0)=30\) and \(P(0)=0\), while the initial values of the substrate $S$ are varied over $9$ different values from $\{25,30,35,40,45,50,55,60,65\}$. The time interval is taken to be $[0,1] \mathrm{s} $ and the fixed stepsize for simulating the continuous SDE part is taken as $\Delta=0.1\mathrm{s} $. The rate constants are given by \(c_{1}=0.02\) \(\mathrm{molec^{-1}s^{-1}}\), \(d_{1}=0.5\) \(\mathrm{s^{-1}}\),
\(c_{2}=0.1\) \(\mathrm{s^{-1}}\), for reactions in  (\ref{MM_reactions}). For $M$ realizations of the exact process $X^{N}$ and the approximating process $Z^{N}$, $\hat{e}_1$ was calculated by the usual Monte-Carlo average: 

\begin{equation}
\label{diff_ex_arnab}
\hat{e}_1=\sum_{i=1}^{4}  |\f{1}{M} \displaystyle \sum_{j=1}^{M} \displaystyle X_{ij}^{N}(t) - \f{1}{M} \displaystyle \sum_{j=1}^{M} \displaystyle  Z_{ij}^{N}(t)| \mbox{ with } t=1\mathrm{s},
\end{equation}
where \(X_{ij}^{N}\), \(Z_{ij}^{N}\) denote the number of molecules of the \(i\)-th species for the \(j\)-th realization for the corresponding processes. The result presented in Figure \ref{static_error_bound2} 
 demonstrates the desired monotone increasing property of the error bound.
% As explained in Remark \ref{remark1}, it is very difficult to obtain $f_{k}(t)=\EE|X^N(t)  -Z^N(t)|$. However,  we compute the expectation 
%\begin{equation}
%\label{diff_ex}
%\hat{f}_1=\f{1}{M}\displaystyle \sum_{j=1}^{M} \displaystyle \sum_{i=1}^{4} |X^N_{ij}(t) - Z^N_{ij}(t)| \mbox{ with } t=1\mathrm{s},
%\end{equation}
%without coupling  the exact process $X^{N}$ and the approximating process $Z^{N}$. It should be observed (see right panel of Figure \ref{static_error_bound2})  that  monotone increasing property of the error bound is also valid for the error  (\ref{diff_ex}).}}

\begin{figure}[H]
\centering
\includegraphics[width=0.8\textwidth]{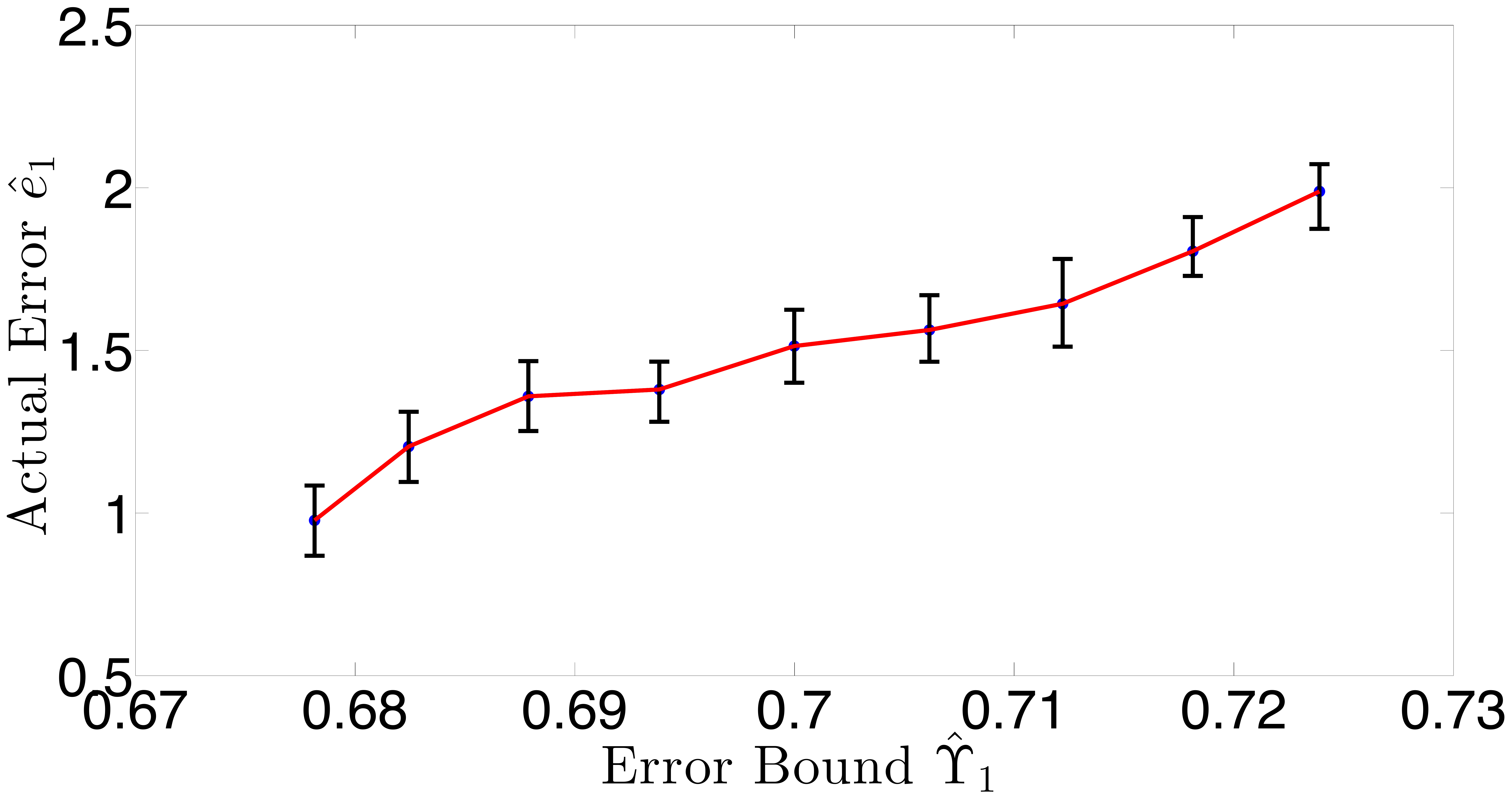}
\caption{ Monotone relation between actual error  $\hat{e}_1$  and the error bound (\ref{error_bound1})   for the Michaelis-Menten model (\ref{MM_reactions}). Monte-Carlo estimates are obtained by the Gillespie's direct  method  and by the  hybrid diffusion  algorithm static partitioning  for different initial conditions when \(R_{1}\) in  (\ref{MM_reactions}) is simulated by a diffusion approximation.
 We compute the error coming from \(R_1\)  through (\ref{error_bound1})  (\(x\) axis)   and the difference of expectations through 
(\ref{diff_ex_arnab}) (\(y\) axis) for each  initial condition
for  \(M=50000\) realizations  at  \(t=1\mathrm{s}\). The blue dots denote $\hat{e}_1$ for initial values and 
the error bars 
represent the confidence interval with lower error bound using a percentage of \(5 \%\) while upper  error bound using a percentage of \(95\%\).  \label{static_error_bound2} 
}
\end{figure}

\subsection{Lotka-Volterra Model }
\label{application:LV}
\vs{.2cm}

For our next example, we consider the popular Lotka-Volterra model, also known as the predator-prey system.
The model describes the dynamics of an  abstract environmental system where two animal species interact. 
Let \(S_1\) and \(S_2\) denote the prey and predator, respectively, the corresponding reaction system is given by 

\begin{equation}
\label{LV_reactions}
S_{1}\stackrel{c_{1}}{\longrightarrow} 2 S_{1},\quad  S_{1}+S_{2} \stackrel{c_{2}}{\longrightarrow} 2 S_{2}   \quad S_{2} \stackrel{c_{3}}{\longrightarrow} \emptyset.   
\end{equation}
The state of the system is defined  by $X(t) \in \Z_{\geq 0}^{2}$ such that \(X_{1}(t),\:X_{2}(t)\) represent the numbers of prey and predators at time \(t>0\), respectively. 
In our simulation,  
\[X(0)=(900, 800)^{T}, \:c_1=2 \: \mathrm{s^{-1}},\:c_2=0.002 \: \mathrm{molec^{-1}s^{-1}},\:c_3=2 \: \mathrm{s^{-1}}.\]
The fixed stepsize for simulating the SDE part is taken as $\Delta=0.5\mathrm{s}$. The  relative threshold error for partitioning  
reactions is taken as \(\varepsilon=0.03\), and the reaction set is again repartitioned after every \(P=50\) iterations of updating the state vector.

Figure \ref{fig:LV_dynamic} demonstrates how  the proposed algorithm switches back and forth between exact and hybrid diffusion approximation depending on the state of the system. As in the case of Michaelis-Menten kinetics, the accuracy of our hybrid diffusion algorithm is evident from Figure \ref{fig:qq_LV_new_scaled_15}   which compares the probability distributions  and the corresponding Q-Q plots  obtained from the exact Gillespie's algorithm and Algorithm \ref{sumalgo}.

%As in the case of Michaelis-Menten kinetics, the accuracy of our hybrid diffusion algorithm is evident from Figure \ref{fig:probability_LV_new_scaled} and the corresponding Q-Q plot in Figure \ref{fig:qq_LV_new_scaled_15} which compare the probability %distributions obtained from the exact Gillespie's algorithm and Algorithm \ref{sumalgo}.

\begin{figure}[H]
\centering
\includegraphics[width=1\textwidth]{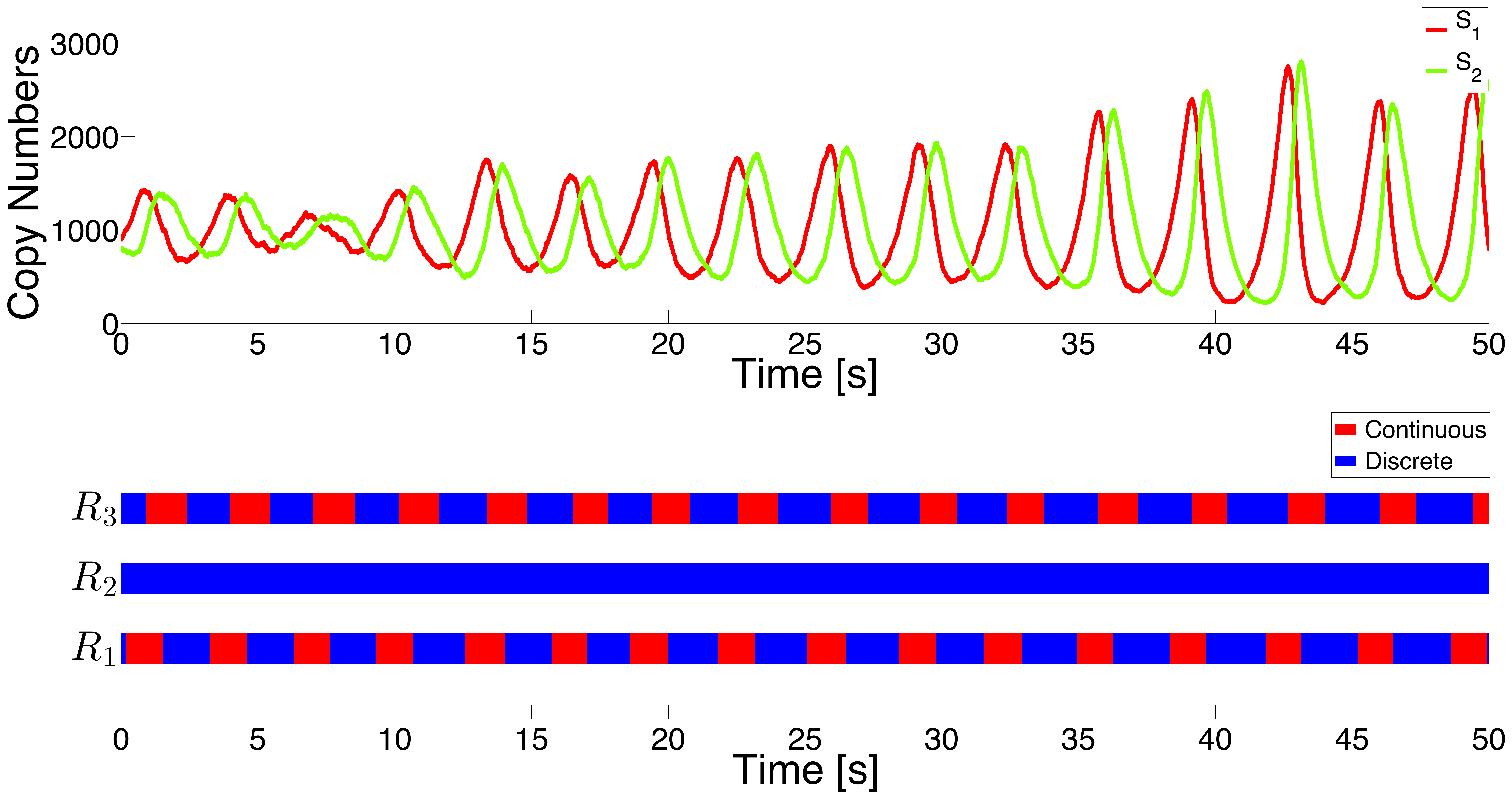}
\caption{Upper Panel:  A single realization of the  Lotka-Volterra model when the fast reactions are modeled by diffusion approximation. Lower Panel: Figure depicting the portions of time when a reaction is treated as fast (continuous) or slow (discrete) as dictated by the error bound  (\ref{error_bound1}).\label{fig:LV_dynamic}}
\end{figure}

\begin{figure}[H]
\centering
\includegraphics[width=1\textwidth]{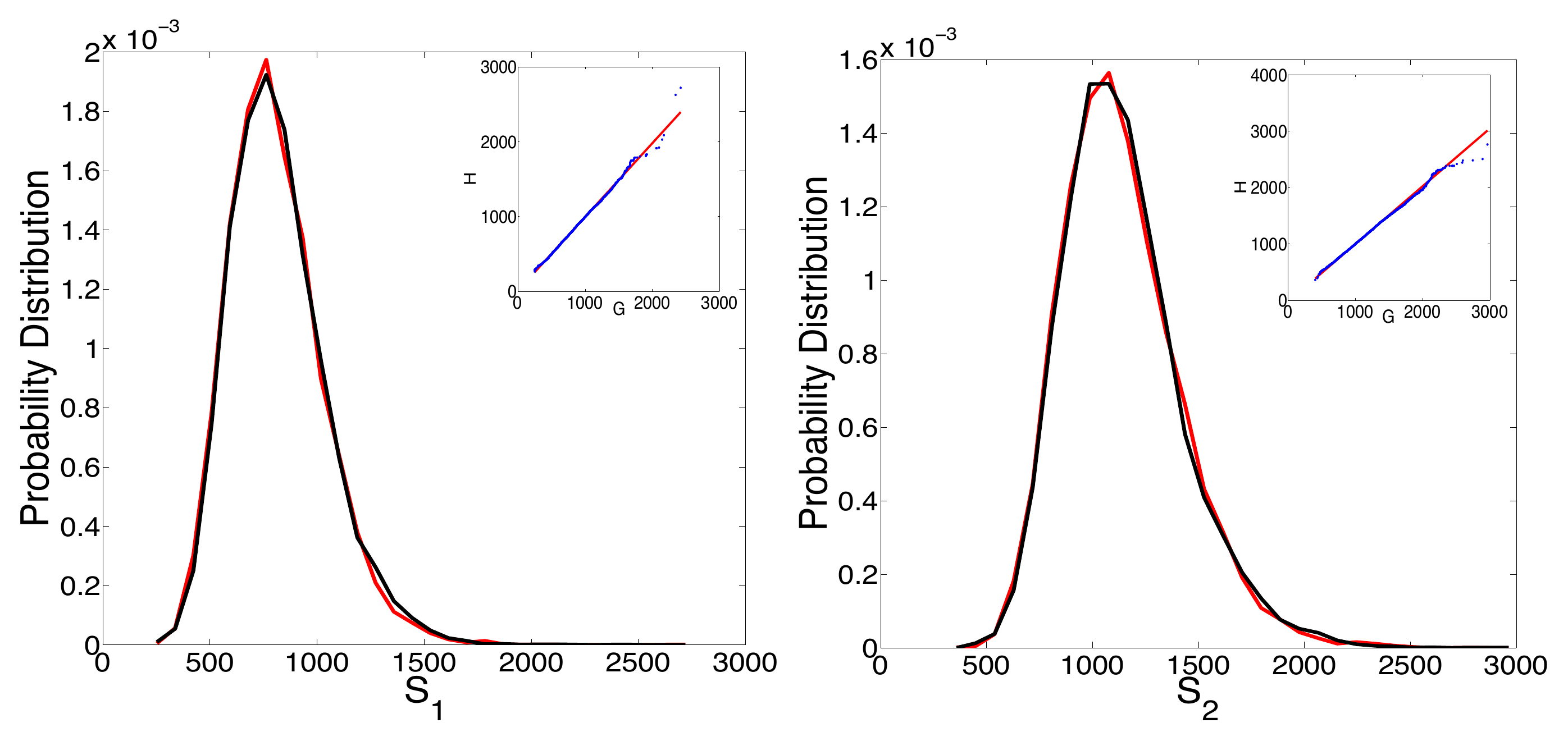}
\caption{Probability distributions of  \(S_1\) (left) and \(S_2\) (right)  at   \(t=15 \mathrm{s}\) from \(15000\) samples constructed  with  (i) the Gillespie's algorithm (black line)  and by (ii) the  hybrid diffusion algorithm (red line). Insets show  Q-Q plot of   \(15000\) samples comparing the Gillespie's algorithm (G) and the hybrid diffusion algorithm (H).\label{fig:qq_LV_new_scaled_15} }
\end{figure}

\subsection{The MAPK Pathway and Gene Expression Model }
\label{application:MAPK}
  
\vs{.2cm}
 The \(\mathrm{MAPK}\) (Mitogen Activated Protein Kinase) pathway is one of the 
most studied signal transduction mechanism that can be observed in all 
eukaryotic cells. \(\mathrm{MAPK}\) cascade conveys external signals from the cell membrane to 
the nucleus and regulates many cellular processes such as proliferation, differentiation, survival and motility.
The basic structure of a \(\mathrm{MAPK}\) cascade consists  of three kinases which are  the kinase kinase \(\mathrm{MAPKKK}\), the kinase \(\mathrm{MAPKK}\), 
and  the final kinase \(\mathrm{MAPK}\). The signaling process is initiated with a \(\mathrm{G-protein}\) which transmits the signal to \(\mathrm{MAPKKK}\). Phosphorylated \(\mathrm{MAPKKK}\) phosphorylates 
\(\mathrm{MAPKK}\) which activates \(\mathrm{MAPK}\). Kinase activation is defined by two reactions in analogy  to the Michaelis-Menten model of Section
\ref{application:MM1}. The first reaction is a reversible reaction which expresses the binding 
of a kinase to its substrate to form a complex, and the second reaction converts this complex to a kinase and an activated substrate. (see \cite{kol:00,kcg:05,osvcgk:05,rs:05,ts:02}
and the references therein) . 

Each \(\mathrm{MAPK}\) cascade is named according to their  \(\mathrm{MAPK}\)  components \cite{rs:05}. 
In this paper, we will study the \(\mathrm{ERK}\) (Extracellular Signal Regulated Kinase) pathway which 
includes  \(\mathrm{Ras}\) as  a \(\mathrm{G-protein}\), \(\mathrm{Raf}\) as \(\mathrm{MAPKKK}\), \(\mathrm{MEK}\)  as \(\mathrm{MAPKK}\) and \(\mathrm{ERK}\) as \(\mathrm{MAPK}\) \cite{kol:00}.
In our model, the process is initiated with \(\mathrm{\mathrm{Ras}-GTP}\) which is the active form of  \(\mathrm{Ras}\). It binds to \(\mathrm{Raf}\) to form 
\(\mathrm{Raf}:\mathrm{Ras-GTP} \) complex which in turn forms \(\mathrm{Raf}_{\mathrm{P}}\)  (phosphorylated \(\mathrm{Raf}\)). \(\mathrm{Raf}_{\mathrm{P}}\) binds \(\mathrm{MEK}\), \(\mathrm{MEK}_{\mathrm{P}}\)
to form  \(\mathrm{\mathrm{MEK}}_{\mathrm{P}}\), \(\mathrm{\mathrm{MEK}}_{\mathrm{PP}}\), respectively. Finally, \(\mathrm{\mathrm{MEK}}_{\mathrm{PP}}\) binds \(\mathrm{ERK}\) and \(\mathrm{ERK}_{\mathrm{P}}\) 
to form \(\mathrm{ERK}_{\mathrm{P}}\)  and \(\mathrm{ERK}_{\mathrm{PP}}\) respectively. Phosphatases, namely \(\mathrm{Pase1}\), \(\mathrm{Pase2}\) and \(\mathrm{Pase3}\) deactivate 
\(\mathrm{Raf}_{\mathrm{P}}\), deactivate \(\mathrm{MEK}\)'s 
 (i.e. \( \mathrm{MEK}_{\mathrm{P}} , \mathrm{MEK}_{\mathrm{PP}}\)) and deactivate  \(\mathrm{ERK}\)'s  (i.e. \(\mathrm{ERK}_{\mathrm{P}}\), \(\mathrm{ERK}_{\mathrm{PP}}\) ), respectively. 

The aim of the \(\mathrm{ERK}\) signaling pathway is to transform extracellular 
signals into intracellular signals and finally into a gene regulatory response. 
External stimulus 
activates a cell surface receptor which in turn  initiates the  
\(\mathrm{ERK}\) pathway in the cell. The transcriptional factor \(\mathrm{ERK}_{\mathrm{PP}}\) can then  change a target gene from an 
inactive form  \(\mathrm{GENE}_{\mathrm{off}}\)
to an active form  \(\mathrm{GENE}_{\mathrm{on}}\). 
For this reason, in our model, the 
reaction rate of the activation reaction, \(d_{21}(t)\), is defined as \(d_{21}(t)=d_{21}^{0}+d_{21}^{1} \left[\mathrm{ERK}_{\mathrm{PP}}\right(t)]\), where  \(d_{21}^{0}\),  \(d_{21}^{1}\) are  constants and \([\mathrm{ERK}_{\mathrm{PP}}(t)]\)
denotes the concentration of  \(\mathrm{ERK}_{\mathrm{PP}}\)  at time $t$. 
The active gene leads to the production of \(\mathrm{mRNA}\) (transcription) and \(\mathrm{mRNA}\) is further processed
into a \(\mathrm{Protein}\)  through the process of translation. The increase in the number of target \(\mathrm{Protein}\)  corresponds to the cellular response to the initial 
extracellular signal 
(see Figure \ref{fig:mapk_graphical}A) \cite{klwklh:09} .

\begin{figure}[H]
\centering
\includegraphics[width=1\textwidth]{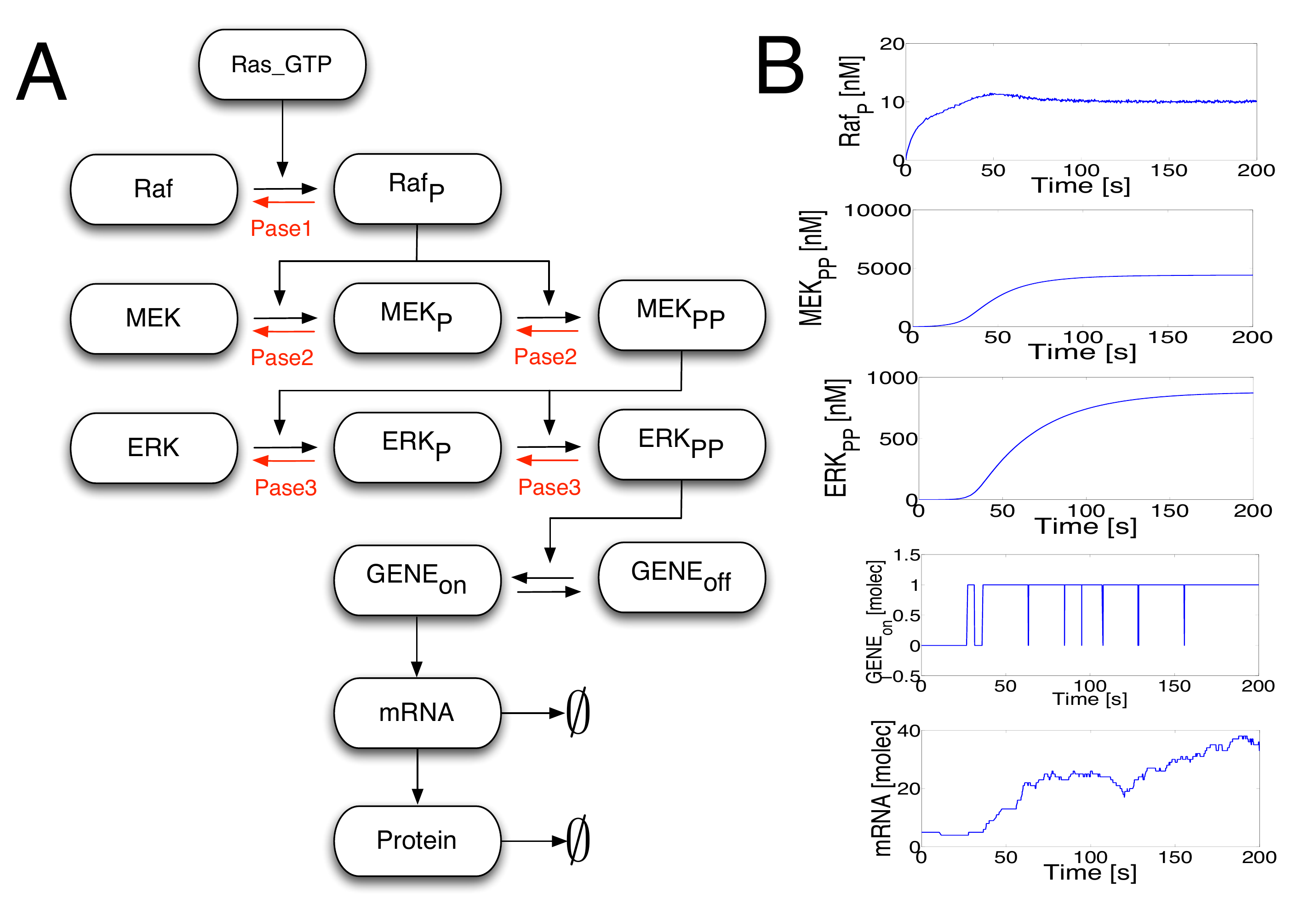}
\caption{ (A) Systematic representation of  \(\mathrm{MAPK}\) pathway and its coupled gene expression.\label{fig:mapk_graphical} 
(B) Concentrations of the species \(\mathrm{Raf}_{\mathrm{P}},\:\mathrm{MEK}_{\mathrm{PP}},\:\mathrm{ERK}_{PP}\)
and the copy numbers of  \(\mathrm{GENE}_{\mathrm{on}}\), \(\mathrm{mRNA}\) in \(t \in \left[0,200 \right] \mathrm{s}\) when   \(R_1,\ldots,R_{20}\) reactions in Table \ref{table_mapk1} are considered as fast reactions and  simulated by diffusion approximation while \(R_{21},\ldots,R_{25}\)  are considered as slow reactions and Markov chain formulation is kept. }
\end{figure}
This is a typical realistic example in modeling coupled cellular 
processes in systems of biology. Due to the combination of fast, high-copy signal transduction with slow, low copy gene expression dynamics
this is a typical multi-scale, stiff problem for which our hybrid algorithm is designed for.
The complete list of reactions and the reaction rate  constants in  \(\mathrm{\left[s^{-1}\right]}\) and  \(\mathrm{\left[molec^{-1}s^{-1}\right]}\)
for unimolecular and bimolecular  reactions, respectively,  for that model are given in  Table \(\ref{table_mapk1}\). The reactions and
their rates are taken from  \cite{csjnnls:09}. 
\newpage
\renewcommand{\arraystretch}{1.5}
\begin{table}[H]

\scriptsize
\begin{center}
\begin{tabular}{|c | c | }
\hline Reactions &Stochastic rate constants \\\hline    
% after \\: \hline or \cline{col1-col2} \cline{col3-col4}  ...

\(R_1:\:\mathrm{Ras-GTP}+ \mathrm{Raf} \xrightleftharpoons[d_{1}]{c_{1}} \mathrm{Raf}:\mathrm{Ras-GTP} \)&\(c_{1}=5e-06 \: \mathrm{molec^{-1}s^{-1}} \)\\
&\(d_{1}=0.0053 \: \mathrm{s^{-1}} \)\\\hline 

\(R_2:\:\mathrm{Raf}:\mathrm{Ras-GTP}\stackrel{c_{2}}{\longrightarrow} \mathrm{Ras-GTP}+\mathrm{Raf}_{\mathrm{P}}\)&\(c_{2}=3.1  \: \mathrm{s^{-1}}  \)\\\hline

\(R_3:\:\mathrm{Raf}_{\mathrm{P}}+\mathrm{Pase1} \xrightleftharpoons[d_{3}]{c_{3}} \mathrm{Raf}_{\mathrm{P}}:\mathrm{Pase1} \)&\(c_{3}=6e-05\: \mathrm{molec^{-1}s^{-1}}  \)\\ 
&\(d_{3}=1.41589e-04\: \mathrm{s^{-1}} \)\\\hline 

\(R_4:\: \mathrm{Raf}_{\mathrm{P}}:\mathrm{Pase1}  \stackrel{c_{4}}{\longrightarrow} \mathrm{Raf}+\mathrm{Pase1}\) &\(c_{4}=3.16228\: \mathrm{s^{-1}} \)\\\hline 

\(R_5:\: \mathrm{Raf}_{\mathrm{P}}+\mathrm{MEK}\xrightleftharpoons[d_{5}]{c_{5}} \mathrm{Raf}_{\mathrm{P}}:\mathrm{MEK} \)&\(c_{5}=1.07e-05 \: \mathrm{molec^{-1}s^{-1}} \)\\ 
&\(d_{5}=0.033 \: \mathrm{s^{-1}} \)\\\hline

\(R_6:\:\mathrm{Raf}_{\mathrm{P}}:\mathrm{MEK}\stackrel{c_{6}}{\longrightarrow} \mathrm{Raf}_{\mathrm{P}}+\mathrm{MEK}_{\mathrm{P}}\)&\(c_{6}=1.9 \: \mathrm{s^{-1}} \)\\\hline

\(R_7:\: \mathrm{MEK}_{\mathrm{P}}+\mathrm{Pase2}\xrightleftharpoons[d_{7}]{c_{7}} \mathrm{MEK}_{\mathrm{P}}:\mathrm{Pase2}\)&\(c_{7}=4.74801e-08 \: \mathrm{molec^{-1}s^{-1}} \)\\ 
&\(d_{7}=0.252982 \: \mathrm{s^{-1}} \)\\\hline 

\(R_8:\: \mathrm{MEK}_{\mathrm{P}}:\mathrm{Pase2} \stackrel{c_{8}}{\longrightarrow} \mathrm{MEK}+\mathrm{Pase2}\)&\(c_{8}=0.112387 \: \mathrm{s^{-1}} \)\\\hline 

\(R_9:\: \mathrm{Raf}_{\mathrm{P}}+ \mathrm{MEK}_{\mathrm{P}} \xrightleftharpoons[d_{9}]{c_{9}} \mathrm{Raf}_{\mathrm{P}}:\mathrm{MEK}_{\mathrm{P}} \)&\(c_{9}=1.07e-05\: \mathrm{molec^{-1}s^{-1}} \)\\ 
&\(d_{9}=0.033\: \mathrm{s^{-1}} \)\\\hline

\(R_{10}:\: \mathrm{Raf}_{\mathrm{P}}:\mathrm{MEK}_{\mathrm{P}} \stackrel{c_{10}}{\longrightarrow} \mathrm{Raf}_{\mathrm{P}}+\mathrm{MEK}_{\mathrm{PP}}\)&\(c_{10}=0.8 \: \mathrm{s^{-1}} \)\\\hline 

\(R_{11}:\: \mathrm{MEK}_{\mathrm{PP}}+\mathrm{ERK}\xrightleftharpoons[d_{11}]{c_{11}} \mathrm{MEK}_{\mathrm{PP}}:\mathrm{ERK}\)&\(c_{11}=8.85125e-07 \: \mathrm{molec^{-1}s^{-1}} \)\\ 
&\(d_{11}=0.01833 \: \mathrm{s^{-1}} \)\\\hline

\(R_{12}:\: \mathrm{MEK}_{\mathrm{PP}}:\mathrm{ERK}\stackrel{c_{12}}{\longrightarrow} \mathrm{MEK}_{\mathrm{PP}}+ \mathrm{ERK}_{\mathrm{P}}\)&\(c_{12}=0.028\: \mathrm{s^{-1}} \)\\\hline

\(R_{13}:\: \mathrm{ERK}_{\mathrm{PP}}+\mathrm{\mathrm{Pase3}}\xrightleftharpoons[d_{13}]{c_{13}} \mathrm{ERK}_{\mathrm{PP}}:\mathrm{\mathrm{Pase3}}\)&\(c_{13}=3.9739e-04 \: \mathrm{molec^{-1}s^{-1}} \)\\ 
&\(d_{13}=5 \: \mathrm{s^{-1}} \)\\\hline

\(R_{14}:\:\mathrm{ERK}_{\mathrm{PP}}:\mathrm{\mathrm{Pase3}} \stackrel{c_{14}}{\longrightarrow} \mathrm{ERK}_{\mathrm{P}}+\mathrm{\mathrm{Pase3}}\)&\(c_{14}=0.0076 \: \mathrm{s^{-1}} \)\\\hline

\(R_{15}:\:\mathrm{MEK}_{\mathrm{PP}}+\mathrm{Pase2}\xrightleftharpoons[d_{15}]{c_{15}} \mathrm{MEK}_{\mathrm{PP}}:\mathrm{Pase2}\)&\(c_{15}=2.37e-05 \: \mathrm{molec^{-1}s^{-1}} \)\\ 
&\(d_{15}=0.79 \: \mathrm{s^{-1}} \)\\\hline

\(R_{16}:\:\mathrm{MEK}_{\mathrm{PP}}:\mathrm{Pase2}  \stackrel{c_{16}}{\longrightarrow} \mathrm{\mathrm{MEK}}_{\mathrm{P}}+\mathrm{Pase2} \)&\(c_{16}=0.112387 \: \mathrm{s^{-1}} \)\\\hline

\(R_{17}:\: \mathrm{MEK}_{\mathrm{PP}}+\mathrm{ERK}_{\mathrm{P}}  \xrightleftharpoons[d_{17}]{c_{17}} \mathrm{MEK}_{\mathrm{PP}}:\mathrm{ERK}_{\mathrm{P}}\)&\(c_{17}=8.85125e-06 \: \mathrm{molec^{-1}s^{-1}} \)\\ 
&\(d_{17}=0.01833 \: \mathrm{s^{-1}} \)\\\hline

\(R_{18}:\:\mathrm{MEK}_{\mathrm{PP}}:\mathrm{ERK}_{\mathrm{P}}\stackrel{c_{18}}{\longrightarrow} \mathrm{MEK}_{\mathrm{PP}}+ \mathrm{ERK}_{\mathrm{PP}}\)&\(c_{18}=0.701662\: \mathrm{s^{-1}} \)\\\hline

\(R_{19}:\:\mathrm{ERK}_{\mathrm{P}}+\mathrm{\mathrm{Pase3}}\xrightleftharpoons[d_{19}]{c_{19}} \mathrm{ERK}_{\mathrm{P}}:\mathrm{\mathrm{Pase3}}\)&\(c_{19}=8.33e-07\: \mathrm{molec^{-1}s^{-1}} \)\\ 
&\(d_{19}=0.25 \: \mathrm{s^{-1}} \)\\\hline

\(R_{20}:\:\mathrm{ERK}_{\mathrm{P}}:\mathrm{\mathrm{Pase3}} \stackrel{c_{20}}{\longrightarrow} \mathrm{ERK}+\mathrm{\mathrm{Pase3}}\)&\(c_{20}=0.0076\: \mathrm{s^{-1}} \)\\\hline

\(R_{21}:\:\mathrm{GENE}_\mathrm{on}\xrightleftharpoons[d_{21}]{c_{21}} \mathrm{GENE}_\mathrm{off}\)&\(c_{21}=0.05 \: \mathrm{s^{-1}} \)\\ 
&\(d_{21}(t)=0.01+0.003 \left[\mathrm{ERK}_{\mathrm{PP}}(t)\right] \: \mathrm{s^{-1}} \)\\\hline

\(R_{22}:\:\mathrm{GENE}_\mathrm{on}\stackrel{c_{22}}{\longrightarrow} \mathrm{GENE}_\mathrm{on}+\mathrm{mRNA}\)&\(c_{22}=0.5 \: \mathrm{s^{-1}} \)\\\hline

\(R_{23}:\:\mathrm{mRNA}\stackrel{c_{23}}{\longrightarrow} \mathrm{mRNA}+\mathrm{Protein}\)&\(c_{23}=0.3\: \mathrm{s^{-1}} \)\\\hline 

\(R_{24}:\:\mathrm{mRNA}\stackrel{c_{24}}{\longrightarrow} \emptyset \)&\(c_{24}=0.015\: \mathrm{s^{-1}} \)\\\hline

\(R_{25}:\:\mathrm{Protein} \stackrel{c_{25}}{\longrightarrow} \emptyset \)&\(c_{25}=5e-06 \: \mathrm{s^{-1}} \)\\ \hline
\end{tabular}  
\end{center}   
\caption{Reactions and rate parameters for the  \(\mathrm{ERK}\) signal  gene expression pathway. \label{table_mapk1}}
\end{table} 
In our application, we will fix  the concentrations of  reactants of the MAPK cascade and the copy numbers of  reactants of the gene expression. Therefore, for the sake of simplicity,  we define the state vector as 
\(X(t)=(X_{\mathrm{MAPK}}(t),X_{\mathrm{GENE}}(t))^{T}\)
where 
\begin{eqnarray*}
X_{\mathrm{MAPK}}(t)&=&(\mathrm{Ras-GTP},\mathrm{Raf}, \mathrm{Raf}:\mathrm{Ras-GTP}, \mathrm{\mathrm{Raf}}_{\mathrm{P}},\mathrm{Pase1},\\
&&\mathrm{\mathrm{Raf}}_{\mathrm{P}}:\mathrm{Pase1},\mathrm{MEK},\mathrm{\mathrm{Raf}}_{\mathrm{P}}:\mathrm{MEK},\mathrm{\mathrm{MEK}}_{\mathrm{P}},\\
&&\mathrm{Pase2},\mathrm{\mathrm{MEK}}_{\mathrm{P}}:\mathrm{Pase2},\mathrm{\mathrm{Raf}}_{\mathrm{P}}:\mathrm{\mathrm{MEK}}_{\mathrm{P}},\mathrm{MEK}_{PP},ERK,\\
&&\mathrm{MEK}_{PP}:ERK,\mathrm{ERK}_{\mathrm{P}},\mathrm{ERK}_{\mathrm{PP}},\mathrm{MEK}_{\mathrm{PP}}:\mathrm{ERK}_{\mathrm{P}},\\
&&\mathrm{\mathrm{Pase3}},\mathrm{ERK}_{\mathrm{P}}:\mathrm{\mathrm{Pase3}},\mathrm{MEK}_{\mathrm{PP}}:\mathrm{Pase2},\mathrm{ERK}_{\mathrm{PP}}:\mathrm{\mathrm{Pase3}}),
\end{eqnarray*}
\(X_{\mathrm{GENE}}(t)=(\mathrm{GENE}_\mathrm{on},\mathrm{GENE}_\mathrm{off},\mathrm{mRNA},\mathrm{Protein})\). 
To convert the amount of species of the components \(X(t)\) from copy numbers to nanomolar 
\(\mathrm{\left[nM \right]}\) concentrations, 
 we use the relation 
\begin{equation}
\label{conversion}
U(t)=X(t)/n_{A} \Omega, 
\end{equation}
where \(n_{A}=6\times 10^{23}\) represents the 
Avogadro's number  and \(\Omega \) is the volume of the reaction compartment  in liters \cite{spfhxnlkpbgkwlfkn:09}.
A complete list of initial number of molecules and corresponding 
initial concentrations can be seen in Table \ref{table_mapk2}.

\begin{table}[H]
\footnotesize 
\begin{center}
\begin{tabular}{|c | c | c |}
\hline Species &Copy numbers& Concentrations [nM] \\\hline    
\(\mathrm{Ras-GTP} \)&10000 &16.6667\\\hline
\(\mathrm{Raf}\)&711081&1185.135\\\hline
\(\mathrm{Raf}:\mathrm{Ras-GTP}\)& 11&0.0183\\\hline
\(\mathrm{\mathrm{Raf}}_{\mathrm{P}}\)&11&0.0183\\\hline
\(\mathrm{Pase1}\)&49990&83.3167\\\hline
\(\mathrm{Raf}_{\mathrm{P}}:\mathrm{Pase1}\)&10&0.0167\\\hline
\(\mathrm{MEK}\)&2830668&4717.78\\\hline
\(\mathrm{\mathrm{Raf}}_{\mathrm{P}}:\mathrm{MEK}\)&173& 0.2883\\\hline
\(\mathrm{\mathrm{MEK}}_{\mathrm{P}}\)&185198&308.6633\\\hline
\(\mathrm{Pase2}\)&121372&202.2867\\\hline
\(\mathrm{\mathrm{MEK}}_{\mathrm{P}}:\mathrm{Pase2}\)&2921& 4.8683\\\hline
\(\mathrm{\mathrm{Raf}}_{\mathrm{P}}:\mathrm{\mathrm{MEK}}_{\mathrm{P}}\)&26&0.0433\\\hline
\(\mathrm{MEK}_{\mathrm{PP}}\)&59&0.0983\\\hline
\(\mathrm{ERK}\)&685746& 1142.91\\\hline
\(\mathrm{MEK}_{\mathrm{PP}}:\mathrm{ERK}\)&768&1.28\\\hline
\(\mathrm{ERK}_{\mathrm{P}}\)&5275&8.7917 \\\hline
\(\mathrm{ERK}_{\mathrm{PP}}\)& 27&0.045\\\hline
\(\mathrm{MEK}_{\mathrm{PP}}:\mathrm{ERK}_{\mathrm{P}}\)&0&0\\\hline
\(\mathrm{\mathrm{Pase3}}\)&165519& 275.865 \\\hline
\(\mathrm{ERK}_{\mathrm{P}}:\mathrm{\mathrm{Pase3}}\)&2823& 4.705\\\hline
\(\mathrm{MEK}_{\mathrm{PP}}:\mathrm{Pase2}\)&187& 0.3167 \\\hline
\(\mathrm{ERK}_{\mathrm{PP}}:\mathrm{\mathrm{Pase3}}\)&360&  0.6\\\hline
\(\mathrm{GENE}_\mathrm{on}\)&0& 0 \\\hline
\(\mathrm{GENE}_\mathrm{off}\)&1&0.0017\\\hline
\(\mathrm{mRNA}\)&5&0.0083\\\hline
\(\mathrm{Protein}\)& 0&0\\\hline
\end{tabular}  
\end{center}   
\caption{Species initial copy numbers and corresponding concentrations  for the nominal volume of the reaction compartment of \(\Omega=10^{-12}\) in liters \label{table_mapk2}.}
\end{table} 
According to the reactions in Table \ref{table_mapk1}, the following \(8\) conservation laws can be identified: 
\begin{eqnarray*}
C_{1}&=&\mathrm{Ras-GTP} +\mathrm{Raf}:\mathrm{Ras-GTP}\\
C_{2}&=&\mathrm{Raf}+\mathrm{\mathrm{Raf}}_{\mathrm{P}}+\mathrm{\mathrm{Raf}}_{\mathrm{P}}:\mathrm{MEK}+\mathrm{\mathrm{Raf}}_{\mathrm{P}}:\mathrm{Pase1}+\mathrm{\mathrm{Raf}}_{\mathrm{P}}:\mathrm{\mathrm{MEK}}_{\mathrm{P}}\\
     &+&\mathrm{Raf}:\mathrm{Ras-GTP}\\
C_{3}&=&\mathrm{Pase1}+\mathrm{\mathrm{Raf}}_{\mathrm{P}}:\mathrm{Pase1}\\
C_{4}&=&\mathrm{MEK}+\mathrm{\mathrm{MEK}}_{\mathrm{P}}+\mathrm{MEK}_{\mathrm{PP}}+\mathrm{\mathrm{MEK}}_{\mathrm{P}}:\mathrm{Pase2}\\
&+&\mathrm{MEK}_{\mathrm{PP}}:\mathrm{ERK}+\mathrm{MEK}_{\mathrm{PP}}:\mathrm{Pase2}+\mathrm{MEK}_{\mathrm{PP}}:\mathrm{ERK}_{\mathrm{P}}\\
&+&\mathrm{\mathrm{Raf}}_{\mathrm{P}}:\mathrm{MEK}+\mathrm{\mathrm{Raf}}_{\mathrm{P}}:\mathrm{\mathrm{MEK}}_{\mathrm{P}}\\
C_{5}&=&\mathrm{\mathrm{MEK}}_{\mathrm{P}}:\mathrm{Pase2}+\mathrm{\mathrm{MEK}}_{\mathrm{PP}}:\mathrm{Pase2}+\mathrm{Pase2}\\
C_{6}&=& \mathrm{ERK} + \mathrm{ERK}_{\mathrm{P}}+ \mathrm{ERK}_{\mathrm{PP}} + \mathrm{ERK}_{\mathrm{P}}:\mathrm{\mathrm{Pase3}} \\
&+& \mathrm{ERK}_{\mathrm{PP}}:\mathrm{\mathrm{Pase3}} + \mathrm{MEK}_{\mathrm{PP}}:\mathrm{ERK} + \mathrm{MEK}_{\mathrm{PP}}:\mathrm{ERK}_{\mathrm{P}}\\
C_{7}&=&\mathrm{ERK}_{\mathrm{P}}:\mathrm{\mathrm{Pase3}}+\mathrm{ERK}_{\mathrm{PP}}:\mathrm{Pase3}+\mathrm{Pase3}\\
C_{8}&=&\mathrm{GENE}_\mathrm{on}+\mathrm{GENE}_\mathrm{off}.\\
\end{eqnarray*}
Here, $C_{1},C_{2},\ldots,C_{8}$ are mass conservation constants.
Since the number of molecules of \(\mathrm{MAPK}\) signaling cascade, \(X_{\mathrm{MAPK}}(t)\) 
are  high, we expect  \(R_1,\ldots,R_{20}\) reactions in Table \ref{table_mapk1}  will be 
 fast reactions while  \(R_{21},\ldots,R_{25}\)  will be  slow reactions. 
Concentrations of \(\mathrm{Raf}_\mathrm{P}\), \(\mathrm{MEK}_{\mathrm{PP}}\), \(\mathrm{ERK}_{\mathrm{PP}}\) and the copy numbers of  
\(\mathrm{GENE}_\mathrm{on}\), \(\mathrm{mRNA}\)  when \(R_1,\ldots,R_{20}\) reactions in Table \ref{table_mapk1} are considered as continuous reactions 
and modeled by diffusion approximation while \(R_{21},\ldots,R_{25}\)  are considered as slow reactions and 
modeled by Markov jump process  in time interval \(\left[0,200 \right] \mathrm{s}\)
can be seen in Figure \ref{fig:mapk_graphical}B. 
This figure demonstrates the amounts of some species when static partitioning algorithm is applied. 
We also implement dynamic partitioning algorithm to the model and compare the computation time of  Algorithm \ref{sumalgo} with  that of 
Gillespie's direct method for different volumes of the reaction compartment  \(\Omega\). In our application, we consider the volume of the cell compartment 
where the signaling reactions take place, is scaled with a positive constant \(\omega\) such that \(\Omega=\omega V\)
where \(V=10^{-12}\) is the nominal 
cell volume in liters \cite{spfhxnlkpbgkwlfkn:09} while the sub-compartment where the gene expression reactions take place
is fixed for all \(\Omega\) values. We consider  concentrations of the reactants 
of the \(\mathrm {MAPK}\) cascade, \(X_{\mathrm{MAPK}}(t)\), and the number of molecules of species of the gene expression, \(X_{\mathrm{GENE}}(t)\),  are same 
for all volumes (see Table \ref{table_mapk2}). Hence, to keep same concentrations for all volumes, we multiply the number of molecules of the components \(X_{\mathrm{MAPK}}(t)\) with \(\omega\) 
if the volume of the reaction compartment is  \(\Omega=\omega V\). Also, it must 
be noticed that the stochastic rate constants of bimolecular reactions are divided by \(\omega\)
when the volume of reaction compartment  is \(\Omega=\omega V\). 
In Table \ref{table:time}, one can see the CPU times of our algorithm  (see Algorithm \ref{sumalgo}) 
and Gillespie's direct method  in seconds for a single  realization of the model in time interval \([0,10]\mathrm{s}\) with \(\Delta=2\mathrm{s} ,\: \varepsilon=0.3,\:P=20\)
for \(\omega=1,2,5,10,10^{2},10^{4},10^{6},10^{8}\). 
To obtain the numerical solution of SDEs, we automatically  choose the stepsize of Runge-Kutta method as explained in 
Section \ref{sec:runge_kutta} with absolute and relative  tolerances
of \(10^{-12}\) and \(10^{-9}\), respectively. Since both the drift, the  diffusion term of SDE  given by (\ref{eq:11})  are 
volume dependent, the CPU time  of  the hybrid  diffusion algorithm decreases as expected when the volume increases. 
Types of reactions according to Algorithm \ref{sumalgo} for \(\omega=1\) can be seen in Figure \ref{fig:reaction_types}.
As discussed before, to preserve the same concentration of species for all volumes, we have to multiply the number of molecules of some species with \(\omega\)
to observe the dynamics for volume  \(\Omega=\omega V\).  
This will make a significant change on the CPU time of the Gillespie's  algorithm. Since application of  the Gillespie's direct method is 
too time consuming for large volumes, we compute its CPU time only for \(\omega=1,2,5,10\). The results 
reveal that  the hybrid diffusion algorithm significantly reduce the computational time.

\renewcommand{\arraystretch}{1.75}
\begin{table}[H]
\footnotesize 
\begin{center}
\begin{tabular}{|c | c | c |c |}
%\multirow{2}{1.5cm}{Method}&\multicolumn{3}{p{5cm}|}{\centering Measurements} \
\hline  \multirow{2}{*} {\(\omega\)}&\multicolumn{2}{p{10cm} |}{\centering\mbox{ CPU Time in seconds }} \\  \cline{2-3}
 & \mbox{ Algorithm \ref{sumalgo} } & \mbox{ Gillespie's Direct Method }\\\hline   
\(1 \)& \(498.82\)& \(2872.02\)\\\hline
\(2\) & \( 498.080 \)&\(6264.26\) \\\hline
\(5 \)&\( 404.36 \)& \(15613.46\)\\\hline
\(10\) & \(345.41 \)& \(31356.38\)\\\hline
\(10^{2}\) & \( 337.01 \)&\rule{.6cm}{0.4pt}\\\hline
\(10^{4} \)& \(167.61 \)& \rule{.6cm}{0.4pt}\\\hline
\(10^{6}\)& \( 119.93\)& \rule{.6cm}{0.4pt}\\\hline
\(10^{8}\)& \( 78.21\)& \rule{.6cm}{0.4pt}\\\hline
\end{tabular}  
\end{center}   
\caption{ The CPU times (in seconds) of  the Algorithm \ref{sumalgo} and the Gillespie's direct method for the \(\mathrm{MAPK}\) cascade together 
with its gene expression. For all volumes, we have the same concentrations  of the \(\mathrm{MAPK}\)  cascade species and the same copy numbers 
 for  species of  gene expression. To preserve the same concentration of \(\mathrm{MAPK}\)  species, we multiply their number of molecules with \(\omega\) to observe the dynamics 
of the model in the reaction compartment with volume \(\Omega=\omega V\) where \(\Delta=2 \mathrm{s}\), \(\varepsilon=0.3,\:P=20\) in \(t \in [0,10] \mathrm{s}\). \label{table:time}}
\end{table}

\begin{figure}[H]
\centering
\includegraphics[width=1\textwidth]{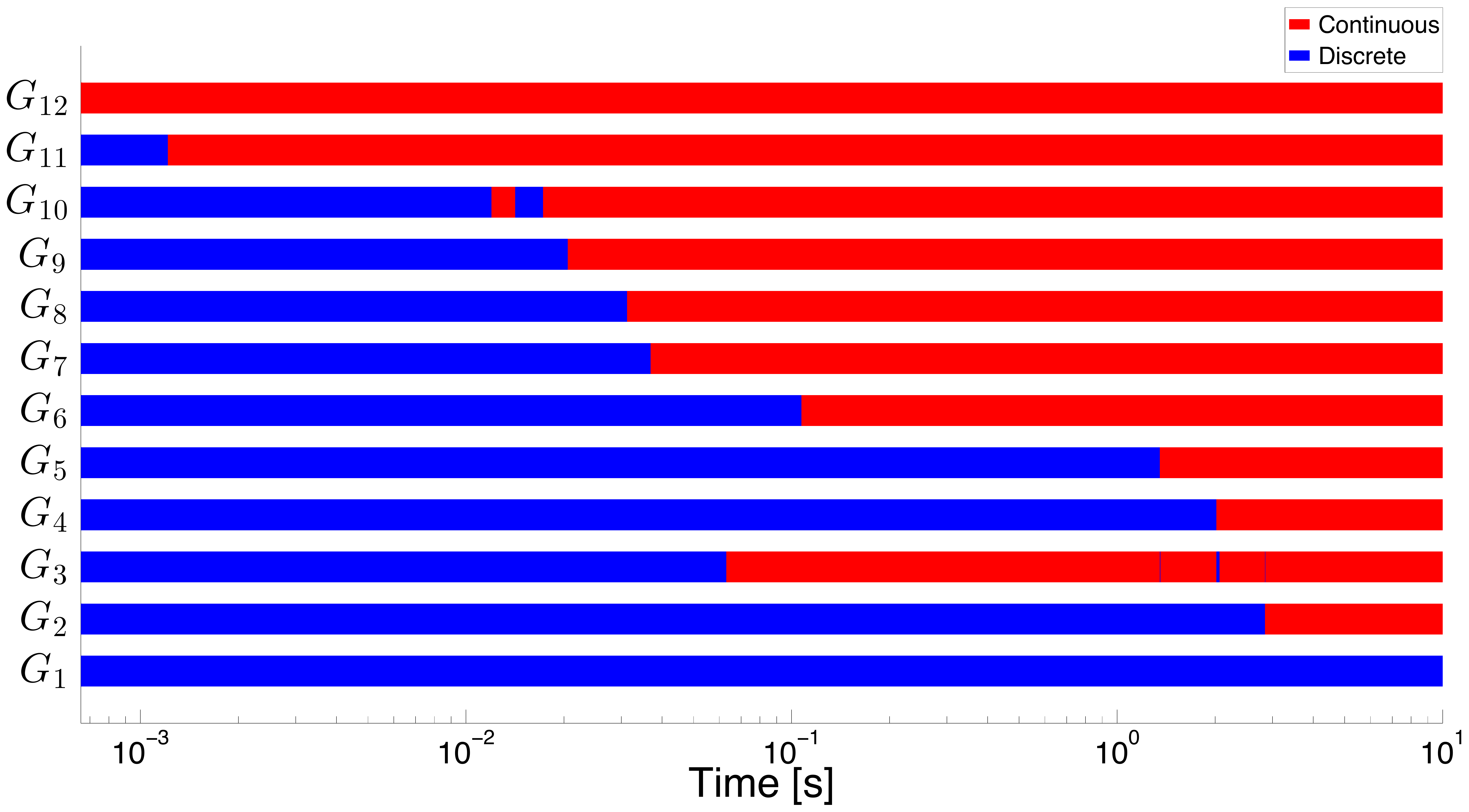}
\caption{ Depicts the portions of time when reactions of  the  \(\mathrm{MAPK}\)  signaling pathway are treated as fast or slow in the time interval \([0,10] \mathrm{s}\) according to our dynamic partitioning algorithm with \(\Delta=2\mathrm{s}\), \(\varepsilon=0.3\), \(P=20\) and \(\omega=1\). Since the number of reactions is too high, to have a better visualization we group the reactions 
and also note that x-axis represents the \(log(t)\) of time \(t>0\). Here, 
\(G_{1}=(R_{21}^{+},R_{21}^{-},R_{22},R_{23},R_{24},R_{25})\), \(G_{2}=R_{18}\), \(G_{3}=(R_{13}^{+},R_{13}^{-})\), \(G_{4}=(R_{17}^{+},R_{17}^{-})\), \(G_{5}=R_{3}^{-}\), \(G_{6}=R_{3}^{+}\),
\(G_{7}=R_{9}^{+}\), \(G_{8}=(R_{4},R_{10})\), \(G_{9}=(R_{2},R_{5}^{+})\), \(G_{10}=R_{6}\), \(G_{11}=(R_{1}^{+},R_{1}^{-})\), \(G_{12}=(R_{5}^{-},R_{7}^{+},R_{8},R_{9}^{-},R_{11}^{+},R_{11}^{-},R_{12},R_{14},R_{15}^{+},R_{15}^{-},R_{16},R_{19}^{+},R_{19}^{-},R_{20})\) where
\(R_{i}^{+}\) denotes the forward reaction \((\rightharpoonup)\) while   \(R_{i}^{-}\)  denotes the backward reaction \((\leftharpoondown)\) for \(i=1,2,\ldots,25\). \label{fig:reaction_types} }
\end{figure}

\section{Conclusion}
In this work, we developed techniques for simulating certain multi-scale reaction systems more efficiently. Our strategy involved a systematic approach 
of separating the reactions into fast and slow groups. The fast reactions are simulated by a diffusion approximation while the exact Gillespie-type simulation procedure 
is maintained for the slow ones. The partitioning of the reaction set is based on an appropriate error bound whose derivation is one of the central themes of the paper. The theoretical results are then effectively encoded in an efficient fast algorithm which also allows for the partition to change dynamically over the course of time. 

The paper used Runge-Kutta integration 
methods to approximate the solutions of SDEs.  Also the conservation relations, occurring naturally in many reaction models, have been properly utilized to reduce the dimensionality of the corresponding SDEs and to ensure strict mass conservation.

The proposed algorithms are implemented for the well-known Michaelis-Menten kinetics, the Lotka-Volterra system 
and a realistic MAPK cascade together with its gene expression. The results reveal that the proposed 
algorithm simulates the multi-scale processes with a significant gain in runtime but little loss in accuracy when compared to
the exact Gillespie algorithm.

\section{Acknowledgement}
\label{acknowledgement}
\textit{Funding} D. Alt\i ntan acknowledges the support from  the Scientific and Technological Research Council of Turkey (T{\"{U}}B{\.{I}}TAK), grant no. 2219.

\bibliographystyle{plain}
\bibliography{hybrid_bibliography}

\end{document}